\documentclass[11pt]{article}
\usepackage[margin=1in]{geometry}



\usepackage{graphicx}
\usepackage{caption}
\usepackage{float}
\usepackage{xcolor}
\usepackage{graphics}
\usepackage{cite}
\usepackage{caption}
\usepackage{bm}
\usepackage{epstopdf}
\captionsetup{font=small,labelfont=bf,justification=justified}
\usepackage{amsmath}
	\usepackage{amssymb}
	 \usepackage{enumerate}
	 \usepackage{balance}
	 \usepackage{empheq}
	 \usepackage{mathtools}
	 \usepackage{algorithm}
\usepackage{algpseudocode}
\usepackage{tabularx}

\usepackage{hyperref}
\hypersetup{
    colorlinks=true,
    linkcolor=black,
    filecolor=black,
    urlcolor=black,
}


\mathchardef\Re="023C
\mathchardef\Im="023D

\newtheorem{theorem}{Theorem}
\newtheorem{lemma}{Lemma}
\newtheorem{proposition}{Proposition}
\newtheorem{corollary}{Corollary}

\newtheorem{definition}{Definition}
\newtheorem{proof}{Proof}

\newtheorem{remark}{Remark}

\setcounter{assumption}{0}

\newcommand{\norm}[1]{\left\lVert#1\right\rVert}

\date{\vspace{-5ex}}

\begin{document}
\title{
	First Order Methods For Globally Optimal Distributed Controllers\\ Beyond Quadratic Invariance
	 	}

	 \author{Luca Furieri and Maryam Kamgarpour\thanks{This research was gratefully funded by the European Union ERC Starting Grant CONENE. Luca Furieri and Maryam Kamgarpour are with the Automatic Control Laboratory, Department of Information Technology and Electrical Engineering, ETH Z\"{u}rich, Switzerland. E-mails: {\tt\footnotesize \{furieril, mkamgar\}@control.ee.ethz.ch}. The source codes of our simulations are available upon request. A preliminary version of this paper appeared on 21.08.2019 as a preprint at https://www.research-collection.ethz.ch/handle/20.500.11850/359817 with a different title.} 
}
	%

\maketitle
	\allowdisplaybreaks
	

\begin{abstract} 
We study the distributed Linear Quadratic Gaussian (LQG) control problem in discrete-time and finite-horizon, where the controller depends linearly on the history of the outputs and it is required to lie in a given subspace, e.g. to possess a certain sparsity pattern. It is well-known that this problem can be solved with convex programming within the Youla domain if and only if a condition known as Quadratic Invariance (QI) holds. In this paper, we first show that given QI sparsity constraints, one can directly descend the gradient of the cost function within the domain of output-feedback controllers and converge to a global optimum. Note that convergence is guaranteed despite non-convexity of the cost function. Second, we characterize a class of Uniquely Stationary (US) problems, for which first-order methods are guaranteed to converge to a global optimum. We show that the class of US problems is strictly larger than that of strongly QI problems and that it is not included in that of QI problems. We refer to Figure~\ref{fig:scheme} for details. Finally, we propose a tractable test for the US property. 
\end{abstract}

\section{Introduction}
\label{se:introduction}

The safe and efficient operation of emerging networked dynamical systems, such as the smart grid and autonomous vehicles, relies on the decision making of multiple interacting agents. Controlling these systems optimally is challenged by an inherent lack of information about the systems’ internal variables, possibly due to privacy concerns, geographic distance or the high cost of implementing a reliable communication network. The classical works of \cite{Witsenhausen, papadimitriou1986intractable} highlighted that, given information constraints, even simple instances of the Linear Quadratic Gaussian (LQG) control problem can result in highly intractable optimization tasks. 


A vast amount of literature has focused on approaching the distributed LQG problem and its variants with convex programming in the Youla parameter \cite{youla1976modern}. This enables utilizing efficient off-the-shelf software for numerical computation. A main challenge inherent to this approach is that the distributed control problem admits an exact convex reformulation if and only if the information constraints and the system dynamics interact in a Quadratically Invariant (QI) manner \cite{rotkowitz2006characterization,QIconvexity}. This limitation severely restricts the class of problems for which optimal distributed controllers can be computed in a tractable way. A variety of approximation methods and alternative controller implementations have henceforth been devised to  deal with the non-QI cases, based both on convex programming and nonlinear optimization. However, these approaches cannot compute a globally optimal sparse output-feedback controller in general. We refer the reader to  \cite{furieri2019sparsity,wang2019system,SDP,wang2018convex,dvijotham2015convex,lin2011augmented}
for a collection of recent results.

The recent years have witnessed a rapid growth of interest in developing learning-based, model-free techniques for optimal control problems. Specifically, some scenarios envision an unknown black-box system, for which an optimal behavior is obtained by observing the system's output trajectories in response to different  controllers and iteratively improving the control policy. In these cases, optimizing within the Youla domain is impractical because one is unable to recover the disturbance trajectories from the observed output trajectories for an unknown dynamical system. Therefore, model-free scenarios motivate optimizing  directly within the domain of output-feedback controllers, for instance, by devising gradient-descent based methods. Convergence of these methods to a global optimum was recently proven for the LQR problem in the non-distributed case \cite{fazel2018global,gravell2019sparse,gravell2019learning,bu2019lqr,Hesameddin2019global}. When carrying on  these methods to the distributed controller case, however, one can in general only guarantee convergence to a stationary point, which may not be a global optimum \cite{bu2019lqr,maartensson2009gradient,hassan2019data}. For the infinite-horizon and static-controller cases, this is mainly due to the set of stabilizing distributed controllers being disconnected in general \cite{feng2019exponential}. To the best of the authors' knowledge, classes of distributed control problems solvable to global optimality with first-order methods are yet to be characterized, and a connection with the QI notion is yet to be established. Furthermore, a condition that is more general than QI for global optimality certificates has not been found yet. Indeed, the QI notion is closely linked to using convex programming; this paper was driven by the intuition that less restrictive conditions for global optimality might exist by instead using first-order optimization methods directly in the domain of output-feedback controllers. We will show that this intuition indeed holds true.


Motivated as above, we investigate first-order methods for  the distributed LQG problem in discrete-time and finite-horizon. Our contributions are as follows. First, we show that given QI sparsity constraints, one can descend the gradient of the generally non-convex cost function in the output-feedback domain and always converge to a globally optimal distributed controller. We foresee that this method will enable devising learning-based policy gradient approaches for distributed control in future works. Second, we characterize a new class of  Uniquely Stationary (US) control problems, which can be solved to global optimality using first-order methods. We show that every strongly QI problem  is US and that there are instances of US problems which are neither strongly QI or QI. We refer to Figure~\ref{fig:scheme} for the details.

\emph{Paper structure: } Section~\ref{se:preliminaries} introduces the necessary notation and background. Section~\ref{sec:first-order} contains our  first result about global optimality given strong QI and a numerical example. Section~\ref{sec:US} establishes our results on  first-order methods for certificates of global optimality strictly beyond QI. We conclude the paper in Section~\ref{sec:conclusions}.

\section{Background and Problem Statement}
\label{se:preliminaries}
We start this section by providing the necessary notation. We then proceed with stating the distributed LQG problem and reviewing useful results about disturbance-feedback control policies and quadratic invariance.

\subsection{Notation}
We use $\mathbb{R}$ to denote the set of real numbers. The $(i,j)$-th element in a matrix $Y \in \mathbb{R}^{m \times n}$ is referred to as $Y_{i,j}$. We use $I_n$ to denote the identity matrix of size $n \times n$,  $0_{m \times n}$ to denote the zero matrix of size $m \times n$ . Whenever the subscripts are omitted, the dimensions are implied by the context. The symbols $\text{Im}(M)$ and $\text{Ker}(M)$ denote the range and the kernel of the linear operator associated with matrix $M$. We write $M=\text{blkdg}(M_1,\ldots,M_n)$ to denote a block-diagonal matrix where the blocks are the matrices $M_1,\ldots,M_n$. For a symmetric matrix $M=M^\mathsf{T}$ we write $M \succ 0$ (resp. $M \succeq 0$) if and only if it is positive definite (resp. positive semidefinite), that is its eigenvalues are strictly positive (resp. non-negative). For two matrices $M,P$ of any dimensions $M\otimes P$ denotes the Kronecker product and for two matrices of equal dimensions $M\odot P$ denotes the Hadamard product\footnote{$(M\odot P)_{i,j}=M_{i,j}P_{i,j}$}. For any matrix $K \in \mathbb{R}^{m \times n}$,  $\text{vec}(K) \in \mathbb{R}^{mn}$ is a vector obtained by stacking the columns of $K$ into a single column.  Given a binary matrix $X \in \{0,1\}^{m \times n}$, we define the associated \emph{sparsity subspace}  as
	\begin{align*}
\text{Sparse}(X)\hspace{-0.1cm}:=\{&Y\hspace{-0.1cm} \mid  Y_{i,j}\hspace{-0.1cm}=0 ~~\text{for all } i,j \text{ such that  } X_{i,j}=0 \;  \}\,.
	\end{align*}
 Similarly, given $Y \in \mathbb{R}^{m \times n}$, we define $X = \text{Struct}(Y)$ as the binary matrix such that $X_{i,j}=0 $ if $Y_{i,j}=0$ and $X_{i,j}=1$ otherwise. Let $X, \hat{X} \in \{0,1\}^{m \times n}$ and $Z \in \{0,1\}^{n \times p}$ be binary matrices. We adopt the following conventions: $X + \hat{X} := \text{Struct}(X + \hat{X})$, $ XZ:=\text{Struct}(XZ)$, $X \leq \hat{X}$ if and only if $X_{i,j}\leq \hat{X}_{i,j}\;\forall i,j$. The Euclidean norm of a vector $v \in \mathbb{R}^n$ is denoted by $\norm{v}_2^2=v^\mathsf{T}v$  and the Frobenius norm of a matrix $M \in \mathbb{R}^{m \times n}$ is denoted by $\norm{M}_{F}^2=\text{Trace}(M^\mathsf{T}M)$. Given a matrix $K \in \mathbb{R}^{m \times n}$ and a continuously differentiable function $J:\mathbb{R}^{m \times n} \rightarrow \mathbb{R}$ we define $\nabla J(K)$ as the $m \times n$ matrix such that $\nabla J(K)_{i,j}=\frac{\partial J(K)}{\partial K_{i,j}}$. For a vector $v \in \mathbb{R}^n$ and a function $f:\mathbb{R}^n \rightarrow \mathbb{R}$ we denote the gradient by  $\nabla f(v) \in \mathbb{R}^n$ and the Hessian by $\nabla^2 f(v) \in \mathbb{R}^{n \times n}$. Given a subspace $\mathcal{K} \subseteq \mathbb{R}^{m \times n}$ we denote its orthogonal complement as $\mathcal{K}^\perp$. The symbol $\mathcal{N}(\mu,\Sigma)$ denotes the normal distribution with expected value $\mu \in \mathbb{R}^n$ and covariance matrix $\Sigma\in \mathbb{R}^{n \times n} \succeq 0$, and $x\sim \mathcal{N}(\mu,\Sigma)$ indicates that  $x\in \mathbb{R}^n$ follows the distribution $\mathcal{N}(\mu,\Sigma)$. For a subspace $\mathcal{K} \subseteq \mathbb{R}^n$, $\Pi_\mathcal{K}(\cdot)$ denotes the projection operator on $\mathcal{K}$.

	\subsection{Problem Setup}\label{se:problems}

	
	
	We consider time-varying linear systems in discrete-time
	\begin{align}
	\label{eq:sys_disc}
	x_{t+1}&=A_tx_t+B_tu_t+w_t\,,\\
	y_t&=C_tx_t+v_t \nonumber\,,
	\end{align}
	where $x_t \in \mathbb{R}^n$ is the system state at time $t$ affected by additive noise $w_t \sim\mathcal{N}(0,\Sigma^w_t)$ with $x_0 \sim \mathcal{N}(\mu_0,\Sigma_0)$ ,  $y_t\in \mathbb{R}^p$ is the output at time $t$ affected by  additive noise $v_t\sim\mathcal{N}(0,\Sigma_t^v)$ and  $u_t \in \mathbb{R}^m$ is the control input at time $t$. We assume that $\Sigma_0,\Sigma_t^w\succeq 0$ and $\Sigma_t^v\succ 0$ for all $t$. We consider the evolution of (\ref{eq:sys_disc}) in finite-horizon for $t=0,\ldots N$, where $N \in \mathbb{N}$. By defining the matrices $\mathbf{A}=\text{blkdg}(A_0,\ldots,A_{N})$,
\begin{equation*}
\mathbf{B}=\begin{bmatrix}
\text{blkdg}(B_0,\ldots,B_{N\text{-}1})\\
0_{n \times mN}
\end{bmatrix}\,,\mathbf{C}=\begin{bmatrix}
\text{blkdg}(C_0,\ldots,C_{N\text{-}1})^\mathsf{T}\\0_{n\times pN}
\end{bmatrix}^\mathsf{T}\,,
\end{equation*}
 and  the vectors $\mathbf{x}=\begin{bmatrix}x_0^\mathsf{T}&\ldots&x_N^\mathsf{T} \end{bmatrix}^\mathsf{T}\in \mathbb{R}^{n(N+1)}$, $\mathbf{y}=\begin{bmatrix}y_0^\mathsf{T}&\ldots&y_{N-1}^\mathsf{T} \end{bmatrix}^\mathsf{T}\in \mathbb{R}^{pN}$, $\mathbf{u}=\begin{bmatrix}u_0^\mathsf{T}&\ldots&u_{N-1}^\mathsf{T}\end{bmatrix}^\mathsf{T}$ $\in \mathbb{R}^{mN}$, $\mathbf{w}=\begin{bmatrix}x_0^\mathsf{T}&w_0^\mathsf{T}&\ldots&w_{N-1}^\mathsf{T}\end{bmatrix}^\mathsf{T}\in \mathbb{R}^{n(N+1)}$ and $\mathbf{v}=\begin{bmatrix}v_0^\mathsf{T}&\ldots&v_{N-1}^\mathsf{T}\end{bmatrix}^\mathsf{T}\in \mathbb{R}^{pN}$, and the shift matrix
\begin{equation*}
\mathbf{Z}=\begin{bmatrix}
0_{n \times nN}&0_{n \times n}\\
I_{nN}&0_{nN \times n}
\end{bmatrix}\,,
\end{equation*}
we can  write the system (\ref{eq:sys_disc}) compactly as
\begin{align}
\label{eq:system_compact}
&\mathbf{x}=\mathbf{P}_{11}\mathbf{w}+\mathbf{P}_{12}\mathbf{u}\,, \quad \mathbf{y}=\mathbf{Cx}+\mathbf{v}\,,
\end{align}
where  $\mathbf{P}_{11}=(I-\mathbf{Z}\mathbf{A})^{-1}$ and $\mathbf{P}_{12}=(I-\mathbf{Z}\mathbf{A})^{-1}\mathbf{Z}\mathbf{B}$. In this paper we consider output-feedback policies of the form
\begin{equation}
\label{eq:control_input_def}
\mathbf{u}=\mathbf{Ky}, \quad \mathbf{K} \in \mathcal{K}\,,
\end{equation}
where $\mathcal{K}$ is a subspace that $1)$ ensures causality of the feedback policy by  forcing  to $0$ those entries of $\mathbf{K}$ corresponding to future outputs, $2)$ may encode arbitrary time-varying spatio-temporal sparsity constraints for distributed control as per \cite{furieri2019unified}, and $3)$ can impose that the control policy is memory-less and time-independent in the sense that $\mathbf{K}=I_N \otimes K$ for some $K \in \mathbb{R}^{m \times p}$. 
%


Our goal is to compute $\mathbf{K} \in \mathcal{K}$  that minimizes  the expected value of a quadratic cost in the states and the inputs:
\begin{equation}
\label{eq:cost}
J(\mathbf{K})\hspace{-0.06cm}:=\hspace{-0.06cm}\mathbb{E}_{\mathbf{w},\mathbf{v}}\hspace{-0.06cm}\left[\sum_{t=0}^{N-1}\hspace{-0.1cm}\left(x_t^\mathsf{T}M_tx_t\hspace{-0.06cm}+\hspace{-0.06cm}u_t^\mathsf{T}R_tu_t\right)\hspace{-0.06cm}+\hspace{-0.06cm}x_N^\mathsf{T}M_Nx_N\right]\hspace{-0.15cm}\,,
\end{equation}	
where  $M_t \succeq 0$ and $R_t \succ 0$ for every $t$. 
\begin{remark}
\emph{The problem of minimizing (\ref{eq:cost}) is known as the Linear Quadratic Gaussian (LQG) problem. It is well-known that a time-invariant and memory-less control policy (commonly denoted as \emph{static})  of the form $u_t=Ky_t$ achieves global optimality when $N \rightarrow \infty$ and there are no subspace constraints to comply with. For the finite-horizon and/or constrained cases,  a time-varying control policy with memory (commonly denoted as \emph{dynamic}) achieves higher performance in general. In this paper, we therefore consider dynamic linear policies as in (\ref{eq:control_input_def}).} 
\end{remark}

From (\ref{eq:system_compact})-(\ref{eq:control_input_def}) we derive the closed-loop equations:
\begin{align}
&\mathbf{x}=(I-\mathbf{P}_{12}\mathbf{KC})^{-1}(\mathbf{P}_{11}\mathbf{w}+\mathbf{P}_{12}\mathbf{Kv})\,, \nonumber\\
&\mathbf{y}=\mathbf{C}(I-\mathbf{P}_{12}\mathbf{KC})^{-1}\mathbf{P}_{11}\mathbf{w}+(I-\mathbf{CP}_{12}\mathbf{K})^{-1}\mathbf{v}\,,\label{eq:closed_loop_K}\\
&\mathbf{u}=\mathbf{KC}(I-\mathbf{P}_{12}\mathbf{KC})^{-1}\mathbf{P}_{11}\mathbf{w}+\mathbf{K}(I-\mathbf{CP}_{12}\mathbf{K})^{-1}\mathbf{v}\,.\nonumber
\end{align}
By defining $\mathbf{M}=\text{blkdg}(M_0,M_1,\ldots,M_N)$, $\mathbf{R}=\text{blkdg}(R_0,\ldots R_{N-1})$, $\mathbf{\Sigma}_w=\text{blkdg}(\Sigma_0,\Sigma_0^w,\ldots, \Sigma_{N-1}^w)$,  $\mathbf{\Sigma}_v=\text{blkdg}(\Sigma_0^v,\ldots, \Sigma_{N-1}^v)$, $\bm{\mu}_w=\begin{bmatrix}\mu_0^\mathsf{T}&0&\ldots&0\end{bmatrix}^\mathsf{T}$ the cost function (\ref{eq:cost}) can thus be written as
\begin{align}
J(\mathbf{K})&=\norm{\mathbf{M}^{\frac{1}{2}}(I-\mathbf{P}_{12}\mathbf{KC})^{-1}\mathbf{P}_{11}\mathbf{\Sigma}_w^{\frac{1}{2}}}_F^2+\norm{\mathbf{M}^{\frac{1}{2}}\mathbf{P}_{12}\mathbf{K}(I-\mathbf{C}\mathbf{P}_{12}\mathbf{K})^{-1}\mathbf{\Sigma}_v^{\frac{1}{2}}}_F^2\nonumber\\
&+\norm{\mathbf{R}^{\frac{1}{2}}\mathbf{K}(I-\mathbf{C}\mathbf{P}_{12}\mathbf{K})^{-1}\mathbf{C}\mathbf{P}_{11}\mathbf{\Sigma}_w^{\frac{1}{2}}}_F^2+\norm{\mathbf{R}^{\frac{1}{2}}\mathbf{K}(I-\mathbf{C}\mathbf{P}_{12}\mathbf{K})^{-1}\mathbf{\Sigma}_v^{\frac{1}{2}}}_F^2\label{eq:cost_K}\\
&+\norm{\mathbf{M}^{\frac{1}{2}}(I-\mathbf{P}_{12}\mathbf{KC})^{-1}\mathbf{P}_{11}\bm{\mu}_w}_2^2+\norm{\mathbf{R}^{\frac{1}{2}}\mathbf{K}(I-\mathbf{C}\mathbf{P}_{12}\mathbf{K})^{-1}\mathbf{C}\mathbf{P}_{11}\bm{\mu}_w}_2^2\,.\nonumber
\end{align}
A derivation of $J(\mathbf{K})$ as per (\ref{eq:cost_K}) is reported in the Appendix.
\begin{remark}
\label{re:polynomial}
\emph{Note that $J(\mathbf{K})$ is a multivariate polynomial in the entries of $\mathbf{K}$. Indeed, one can verify
\begin{equation*}
(I-\mathbf{CP}_{12}\mathbf{K})^{-1}=\sum_{i=0}^{N}(\mathbf{CP}_{12}\mathbf{K})^i\,,
\end{equation*}
due to the fact that each $p \times p$ block on the diagonal of $\mathbf{CP}_{12}\mathbf{K}$ is the zero matrix by construction, and hence $(\mathbf{CP}_{12}\mathbf{K})^i=0_{pN \times pN}$ for every  $i\geq N+1$. }
\end{remark}

 To summarize, in this paper we are interested in solving the following optimization problem $\mathcal{P}_K$:
 \begin{alignat*}{3}
 &\textbf{Problem} ~~&&\mathcal{P}_K\\
 &\min_{\mathbf{K} \in \mathcal{K}}&&J(\mathbf{K})\,,
 \end{alignat*}
which might be non-convex due to $J$ being non-convex in $\mathbf{K}$ in general.
\subsection{Disturbance-feedback strategies}
The classical way to deal with the non-convexity of $J(\mathbf{K})$ is to parametrize the output-feedback policy $\mathbf{u}=\mathbf{Ky}$ in terms of an equivalent disturbance-feedback policy $\mathbf{u}=\mathbf{QCP}_{11}\mathbf{w}+\mathbf{Qv}$  \cite{colombino2017mutually,furieri2019unified}. Such parametrization is akin to the \emph{Youla parametrization} \cite{youla1976modern}. Similarly to \cite{colombino2017mutually,furieri2019unified}, we have the following result, whose proof is reported in the Appendix.


\begin{lemma}
\label{le:strictly_convex}
Let us define function $\tilde{J}:\mathbb{R}^{mN \times pN}\rightarrow \mathbb{R}$ as
\begin{align}
\tilde{J}(\mathbf{Q})&=\norm{\mathbf{M}^{\frac{1}{2}}(I+\mathbf{P}_{12}\mathbf{QC})\mathbf{P}_{11}\mathbf{\Sigma}_w^{\frac{1}{2}}}_F^2+\norm{\mathbf{M}^{\frac{1}{2}}\mathbf{P}_{12}\mathbf{Q}\mathbf{\Sigma}_v^{\frac{1}{2}}}_F^2+\norm{\mathbf{R}^{\frac{1}{2}}\mathbf{Q}\mathbf{C}\mathbf{P}_{11}\mathbf{\Sigma}_w^{\frac{1}{2}}}_F^2\label{eq:cost_Q}\\
&+\norm{\mathbf{R}^{\frac{1}{2}}\mathbf{Q}\mathbf{\Sigma}_v^{\frac{1}{2}}}_F^2+\norm{\mathbf{R}^{\frac{1}{2}}\mathbf{Q}\mathbf{C}\mathbf{P}_{11}\bm{\mu}_w}_2^2+\norm{\mathbf{M}^{\frac{1}{2}}(I+\mathbf{P}_{12}\mathbf{QC})\mathbf{P}_{11}\bm{\mu}_w}_2^2\nonumber\,.
\end{align}
Let  $h:\mathbb{R}^{mN\times pN} \rightarrow \mathbb{R}^{mN\times pN}$ be the bijection defined as 
\begin{align*}
&h(\mathbf{Q},\mathbf{CP}_{12})=(I+\mathbf{Q}\mathbf{CP}_{12})^{-1}\mathbf{Q}\,.
\end{align*}The following facts hold.
\begin{enumerate}
\item $\tilde{J}(\mathbf{Q})$ is strictly convex and quadratic in $\mathbf{Q}$.
\item $\tilde{J}(h^{-1}(\mathbf{K},\mathbf{CP}_{12}))=J(\mathbf{K})$ for all $\mathbf{K} \in \mathbb{R}^{mN \times pN}$.
\item $\tilde{J}(\mathbf{Q})=J(h(\mathbf{Q},\mathbf{CP}_{12}))$ for all $\mathbf{Q} \in \mathbb{R}^{mN \times pN}$.
\end{enumerate}
\end{lemma}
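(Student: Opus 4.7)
The plan is to verify the three claims in the natural order: first that $h$ is a well-defined bijection, then that $J$ and $\tilde{J}$ agree under the change of variables (Parts 2 and 3), and finally that $\tilde{J}$ is strictly convex quadratic (Part 1). The single algebraic identity underpinning Parts 2 and 3 will be the push-through identity $A(I-BA)^{-1} = (I-AB)^{-1}A$, applied with $A = \mathbf{P}_{12}\mathbf{K}$ and $B = \mathbf{C}$.

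For bijectivity, I would note that $\mathbf{C}\mathbf{P}_{12}$ is strictly block lower triangular in the $p\times m$ block partition, due to the one-step delay coming from the shift $\mathbf{Z}$ that defines $\mathbf{P}_{12}$. Consequently, for every causal $\mathbf{Q}$, the product $\mathbf{Q}\mathbf{C}\mathbf{P}_{12}$ is strictly block lower triangular and hence nilpotent, so $I+\mathbf{Q}\mathbf{C}\mathbf{P}_{12}$ is invertible with finite Neumann series $\sum_{i=0}^{N-1}(-\mathbf{Q}\mathbf{C}\mathbf{P}_{12})^{i}$; the analogous statement for $I-\mathbf{C}\mathbf{P}_{12}\mathbf{K}$ is the one already invoked in Remark~\ref{re:polynomial}. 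A one-line manipulation starting from $\mathbf{K} = (I+\mathbf{Q}\mathbf{C}\mathbf{P}_{12})^{-1}\mathbf{Q}$ yields $\mathbf{K} = \mathbf{Q}(I-\mathbf{C}\mathbf{P}_{12}\mathbf{K})$ and hence $\mathbf{Q} = \mathbf{K}(I-\mathbf{C}\mathbf{P}_{12}\mathbf{K})^{-1}$, so $h^{-1}(\mathbf{K},\mathbf{C}\mathbf{P}_{12}) = \mathbf{K}(I-\mathbf{C}\mathbf{P}_{12}\mathbf{K})^{-1}$, giving bijectivity.

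For Parts 2 and 3, I would apply the push-through identity to obtain $\mathbf{P}_{12}\mathbf{K}(I-\mathbf{C}\mathbf{P}_{12}\mathbf{K})^{-1} = (I-\mathbf{P}_{12}\mathbf{K}\mathbf{C})^{-1}\mathbf{P}_{12}\mathbf{K}$, which by the previous paragraph equals $\mathbf{P}_{12}\mathbf{Q}$; and from the elementary expansion $(I-\mathbf{P}_{12}\mathbf{K}\mathbf{C})^{-1} = I + (I-\mathbf{P}_{12}\mathbf{K}\mathbf{C})^{-1}\mathbf{P}_{12}\mathbf{K}\mathbf{C}$ deduce $(I-\mathbf{P}_{12}\mathbf{K}\mathbf{C})^{-1} = I+\mathbf{P}_{12}\mathbf{Q}\mathbf{C}$. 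Substituting these identities together with $\mathbf{K}(I-\mathbf{C}\mathbf{P}_{12}\mathbf{K})^{-1} = \mathbf{Q}$ into each of the six summands of $J(\mathbf{K})$ in~(\ref{eq:cost_K}) produces, termwise, the six summands of $\tilde{J}(\mathbf{Q})$ in~(\ref{eq:cost_Q}). This equality holds for every $\mathbf{K}$ and its image $\mathbf{Q} = h^{-1}(\mathbf{K},\mathbf{C}\mathbf{P}_{12})$, establishing both Parts 2 and 3.

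For Part 1, each summand of $\tilde{J}(\mathbf{Q})$ is the squared Frobenius (or Euclidean) norm of an affine function of $\mathbf{Q}$, so $\tilde{J}$ is a convex quadratic. Strict convexity follows already from the single term $\norm{\mathbf{R}^{\frac{1}{2}}\mathbf{Q}\mathbf{\Sigma}_v^{\frac{1}{2}}}_F^2 = \text{vec}(\mathbf{Q})^{\mathsf{T}}(\mathbf{\Sigma}_v\otimes \mathbf{R})\text{vec}(\mathbf{Q})$, whose Hessian $2(\mathbf{\Sigma}_v\otimes \mathbf{R})$ is positive definite because $\mathbf{\Sigma}_v\succ 0$ and $\mathbf{R}\succ 0$; all remaining summands contribute positive semidefinite Hessians that can only strengthen this. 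The main nuisance I foresee is the careful bookkeeping when substituting into the six terms of $J$, but this reduces mechanically to the two identities derived above, so the calculation is routine.
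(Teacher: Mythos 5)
Your proposal is correct, and for Parts 2 and 3 it supplies exactly the computation the paper leaves implicit (``direct computation by exploiting the definition of $h$''): the push-through identity $\mathbf{P}_{12}\mathbf{K}(I-\mathbf{C}\mathbf{P}_{12}\mathbf{K})^{-1}=(I-\mathbf{P}_{12}\mathbf{K}\mathbf{C})^{-1}\mathbf{P}_{12}\mathbf{K}$ together with $\mathbf{Q}=\mathbf{K}(I-\mathbf{C}\mathbf{P}_{12}\mathbf{K})^{-1}$ and $(I-\mathbf{P}_{12}\mathbf{K}\mathbf{C})^{-1}=I+\mathbf{P}_{12}\mathbf{Q}\mathbf{C}$ does map the six summands of $J$ termwise onto those of $\tilde{J}$. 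For Part 1 you take a mildly lighter route than the paper: the paper computes the full Hessian $2\big[(\mathbf{\Sigma}_v+\mathbf{C}\mathbf{P}_{11}\mathbf{\Sigma}_w\mathbf{P}_{11}^\mathsf{T}\mathbf{C}^\mathsf{T})\otimes(\mathbf{R}+\mathbf{P}_{12}^\mathsf{T}\mathbf{M}\mathbf{P}_{12})+\cdots\big]$ and argues it is positive definite, whereas you lower-bound the Hessian by the single contribution $2(\mathbf{\Sigma}_v\otimes\mathbf{R})\succ 0$ from the term $\norm{\mathbf{R}^{1/2}\mathbf{Q}\mathbf{\Sigma}_v^{1/2}}_F^2$ and note the remaining summands only add positive semidefinite pieces; this is equally valid and avoids the bookkeeping. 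One remark on scope: your nilpotency argument for $\mathbf{Q}\mathbf{C}\mathbf{P}_{12}$ (and hence well-definedness and bijectivity of $h$) is stated for causal, i.e.\ block lower triangular, $\mathbf{Q}$, while the lemma is phrased over all of $\mathbb{R}^{mN\times pN}$; this restriction matches the paper's own implicit assumption (its Appendix likewise asserts the diagonal blocks of $\mathbf{Q}\mathbf{C}\mathbf{P}_{12}$ vanish ``by construction''), so you are if anything being more careful than the source, not introducing a gap.
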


\vspace{0.2cm}

In other words, the nonlinear change of coordinates induced by $h$ allows expressing the non-convex cost function $J(\mathbf{K})$ in (\ref{eq:cost_K}) as the convex function $\tilde{J}(\mathbf{Q})$ in (\ref{eq:cost_Q}).  
Last, we characterize the following property of $J(\mathbf{K})$ to be exploited in Section~\ref{sec:first-order} and Section~\ref{sec:US}.  The corresponding proof is reported in the Appendix.

%

\begin{lemma}
\label{le:bounded_level_sets}
Let $\mathbf{K}_0\in \mathbb{R}^{mN \times pN}$ and define the sublevel set of $J(\mathbf{K}_0)$ as $\mathcal{L}:=\{\mathbf{K}|~J(\mathbf{K}) \leq J(\mathbf{K}_0)\}$. The sublevel set $\mathcal{L}$ is bounded for any $\mathbf{K}_0$.
\end{lemma}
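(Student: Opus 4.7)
The plan is to pull the sublevel set $\mathcal{L}$ back through the bijection $h$ of Lemma~\ref{le:strictly_convex} into a bounded sublevel set of the strictly convex quadratic $\tilde{J}$, and then push that bound forward to a bound on $\mathbf{K}$ by exploiting the polynomial structure of $h$.

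First, I would verify directly that $\tilde{J}(\mathbf{Q})$ is coercive. The summand $\norm{\mathbf{R}^{\frac{1}{2}}\mathbf{Q}\mathbf{\Sigma}_v^{\frac{1}{2}}}_F^2$ in (\ref{eq:cost_Q}) equals $\text{vec}(\mathbf{Q})^{\mathsf{T}}(\mathbf{\Sigma}_v\otimes\mathbf{R})\text{vec}(\mathbf{Q})$, which is a positive definite quadratic form because $\mathbf{R}\succ 0$ and $\mathbf{\Sigma}_v\succ 0$. All remaining summands in (\ref{eq:cost_Q}) are nonnegative, so $\tilde{J}(\mathbf{Q})\to\infty$ as $\norm{\mathbf{Q}}_F\to\infty$ and every sublevel set $\{\mathbf{Q}\mid\tilde{J}(\mathbf{Q})\le c\}$ is bounded. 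Then, for any $\mathbf{K}\in\mathcal{L}$, set $\mathbf{Q}=h^{-1}(\mathbf{K},\mathbf{CP}_{12})=\mathbf{K}(I-\mathbf{CP}_{12}\mathbf{K})^{-1}$; part 2 of Lemma~\ref{le:strictly_convex} gives $\tilde{J}(\mathbf{Q})=J(\mathbf{K})\le J(\mathbf{K}_0)$, so $\mathbf{Q}$ lies in a bounded set in $\mathbb{R}^{mN\times pN}$.

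Next, I would transfer the bound from $\mathbf{Q}$ back to $\mathbf{K}=(I+\mathbf{Q}\mathbf{CP}_{12})^{-1}\mathbf{Q}$. The construction of $\mathbf{P}_{12}$ through the shift matrix $\mathbf{Z}$ makes $\mathbf{CP}_{12}$ strictly block lower triangular as a block matrix with $p\times m$ blocks. Since $\mathbf{K}$ is block lower triangular by the causality constraint encoded in $\mathcal{K}$, so is $\mathbf{Q}=\mathbf{K}(I-\mathbf{CP}_{12}\mathbf{K})^{-1}$, and a direct block-index check then shows that $\mathbf{Q}\mathbf{CP}_{12}$ is strictly block lower triangular, hence $(\mathbf{Q}\mathbf{CP}_{12})^{N}=0$. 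The Neumann series therefore truncates to $(I+\mathbf{Q}\mathbf{CP}_{12})^{-1}=\sum_{i=0}^{N-1}(-\mathbf{Q}\mathbf{CP}_{12})^i$, so $\mathbf{K}=h(\mathbf{Q},\mathbf{CP}_{12})$ is a matrix polynomial of bounded degree in the entries of $\mathbf{Q}$. Polynomial maps carry bounded sets to bounded sets, giving a uniform bound on $\norm{\mathbf{K}}_F$ over $\mathcal{L}$.

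The main technical step is the nilpotency of $\mathbf{Q}\mathbf{CP}_{12}$, which is what converts the implicit relation $\mathbf{K}=h(\mathbf{Q},\mathbf{CP}_{12})$ into a continuous map with polynomial growth. Without the block-triangularity inherited from causality, $(I+\mathbf{Q}\mathbf{CP}_{12})^{-1}$ could a priori blow up on a bounded set in $\mathbf{Q}$-space, so the structural assumption is exactly what closes the argument.
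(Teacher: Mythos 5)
Your proof is correct and follows essentially the same route as the paper's: pull $\mathcal{L}$ back through $h^{-1}$ into a bounded sublevel set of $\tilde{J}$ (you get boundedness from coercivity of the $\norm{\mathbf{R}^{1/2}\mathbf{Q}\mathbf{\Sigma}_v^{1/2}}_F^2$ term, the paper from strict convexity of $\tilde{J}$ --- both work), and then push the bound forward by observing that nilpotency of $\mathbf{Q}\mathbf{CP}_{12}$ truncates $h$ to a polynomial map, which carries bounded sets to bounded sets. Your explicit derivation of that nilpotency from the causal block-triangular structure is in fact more careful than the paper's one-line ``by construction'' justification.
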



	\subsection{Quadratic invariance}
 Since $\tilde{J}$ is convex and it corresponds to $J$ up to a nonlinear change of coordinates, one may exploit $\tilde{J}$ for convex computation of constrained controllers. In particular, if and only if a property denoted as Quadratic Invariance (QI) holds \cite{rotkowitz2006characterization,QIconvexity}, one can solve a convex program in $\mathbf{Q}$ that is equivalent to $\mathcal{P}_K$. For our finite-horizon setting, it is convenient to review the notions of QI  and strong QI and recall the corresponding convexity result from \cite{furieri2019unified}.
 \begin{definition}
\emph{ A subspace $\mathcal{K} \subset \mathbb{R}^{mN\times pN}$ is \emph{QI} with respect to $\mathbf{CP}_{12}$ if and only if 
  \begin{equation*}
 \mathbf{K}\mathbf{CP}_{12}\mathbf{K} \in \mathcal{K}\,, \quad \forall \mathbf{K}\in \mathcal{K}\,.
 \end{equation*}
 and it is \emph{strongly QI} with respect to $\mathbf{CP}_{12}$ if and only if
 \begin{equation*}
 \mathbf{K}_1\mathbf{CP}_{12}\mathbf{K}_2 \in \mathcal{K}\,, \quad \forall \mathbf{K}_1,\mathbf{K}_2 \in \mathcal{K}\,.
 \end{equation*}}
 \end{definition}
 
Note that a general subspace is QI if it is strongly QI, but not vice-versa; instead, a sparsity subspace $\text{Sparse}(\mathbf{S})$ is QI \emph{if and only if} it is strongly QI \cite{rotkowitz2006characterization}. Now, notice that by Lemma~\ref{le:strictly_convex} our original  problem $\mathcal{P}_K$ is equivalent to 
 
\begin{equation}
\label{eq:nonconvex_Q}
\min_{\mathbf{Q} \in h^{-1}(\mathcal{K},\mathbf{CP}_{12})} \tilde{J}(\mathbf{Q})\,.
\end{equation}
  The  QI result in finite-horizon is that problem (\ref{eq:nonconvex_Q}) is convex if and only QI holds. We refer to \cite{furieri2019unified,rotkowitz2006characterization,QIconvexity} for details.
\begin{theorem}[QI]
 \label{th:QI}
The following three statements are equivalent.
\begin{enumerate}
\item The set $h^{-1}(\mathcal{K},\mathbf{CP}_{12})=-h(\mathcal{K},\mathbf{CP}_{12})$ is convex.
\item $\mathcal{K}$ is \emph{QI} with respect to $\mathbf{CP}_{12}$.
\item $h^{-1}(\mathcal{K},\mathbf{CP}_{12})=\mathcal{K}$.
\end{enumerate}
 \end{theorem}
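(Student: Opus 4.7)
The plan is to establish the circular chain $(3)\Rightarrow(1)\Rightarrow(2)\Rightarrow(3)$. The central tool is the fact, highlighted in Remark~\ref{re:polynomial}, that $\mathbf{CP}_{12}\mathbf{K}$ has all block-diagonal entries equal to zero and is therefore nilpotent of order at most $N+1$. Consequently $h$ and $h^{-1}$ admit the finite power-series representations
\begin{align*}
h^{-1}(\mathbf{K},\mathbf{CP}_{12}) &= \mathbf{K}(I-\mathbf{CP}_{12}\mathbf{K})^{-1} = \sum_{i=0}^{N}\mathbf{K}(\mathbf{CP}_{12}\mathbf{K})^i,\\
h(\mathbf{Q},\mathbf{CP}_{12}) &= (I+\mathbf{Q}\mathbf{CP}_{12})^{-1}\mathbf{Q} = \sum_{i=0}^{N}(-1)^i\mathbf{Q}(\mathbf{CP}_{12}\mathbf{Q})^i.
\end{align*}
A short calculation using the push-through identity $(I-AB)^{-1}A = A(I-BA)^{-1}$ moreover shows $-h(-\mathbf{K},\mathbf{CP}_{12}) = h^{-1}(\mathbf{K},\mathbf{CP}_{12})$; specialized to any subspace, where $\mathcal{K}=-\mathcal{K}$, this yields the set equality $h^{-1}(\mathcal{K},\mathbf{CP}_{12})=-h(\mathcal{K},\mathbf{CP}_{12})$ for free, independently of QI. The genuine content of statement~(1) is therefore convexity of this common set.

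The implication $(3)\Rightarrow(1)$ is immediate since $\mathcal{K}$ is a subspace, hence convex, and the set equality is automatic. For $(2)\Rightarrow(3)$, I would first show $h^{-1}(\mathcal{K})\subseteq\mathcal{K}$ by induction on $i$. Fix $\mathbf{K}\in\mathcal{K}$ and assume $\mathbf{K}(\mathbf{CP}_{12}\mathbf{K})^i\in\mathcal{K}$. Applying QI to $s\mathbf{K}+t\,\mathbf{K}(\mathbf{CP}_{12}\mathbf{K})^i\in\mathcal{K}$ and equating coefficients in the resulting polynomial in $s,t$ (each coefficient must separately lie in the subspace $\mathcal{K}$) yields the polarization identity $2\,\mathbf{K}(\mathbf{CP}_{12}\mathbf{K})^{i+1}\in\mathcal{K}$, closing the induction. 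Summing the finite series then gives $h^{-1}(\mathbf{K})\in\mathcal{K}$, whence $h^{-1}(\mathcal{K})\subseteq\mathcal{K}$. The symmetric argument applied to the expansion of $h$ gives $h(\mathcal{K})\subseteq\mathcal{K}$; applying the bijection $h^{-1}$ to the latter inclusion yields $\mathcal{K}\subseteq h^{-1}(\mathcal{K})$, and combining with the former forces $h^{-1}(\mathcal{K})=\mathcal{K}$.

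The hard direction is $(1)\Rightarrow(2)$. Here I would fix $\mathbf{K}\in\mathcal{K}$ and, for small $t\in\mathbb{R}$, form the midpoint $\mathbf{Q}(t):=\tfrac{1}{2}\bigl(h^{-1}(t\mathbf{K})+h^{-1}(-t\mathbf{K})\bigr)\in h^{-1}(\mathcal{K})$, which lies in the set by convexity. The power-series expansion makes all odd-degree terms in $t$ cancel, giving $\mathbf{Q}(t)=t^2\,\mathbf{K}\mathbf{CP}_{12}\mathbf{K}+O(t^4)$. Applying the bijection $h$ and using $h(\mathbf{Q})=\mathbf{Q}+O(\mathbf{Q}^2)$, I obtain $\mathbf{K}'(t):=h(\mathbf{Q}(t))\in\mathcal{K}$ with $\mathbf{K}'(t)=t^2\,\mathbf{K}\mathbf{CP}_{12}\mathbf{K}+O(t^4)$. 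Since $\mathcal{K}$ is a closed finite-dimensional subspace, dividing by $t^2$ and letting $t\to 0$ delivers $\mathbf{K}\mathbf{CP}_{12}\mathbf{K}\in\mathcal{K}$, which is exactly QI.

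The main obstacle is precisely this last implication: convexity is a purely geometric hypothesis, whereas QI is an algebraic closure property, and the two must be linked. The trick is to exploit the nilpotency of $\mathbf{CP}_{12}\mathbf{K}$ so that every expansion is a finite polynomial (no convergence issues), and to use the symmetric midpoint of $h^{-1}(t\mathbf{K})$ and $h^{-1}(-t\mathbf{K})$ to isolate the quadratic-in-$t$ obstruction $\mathbf{K}\mathbf{CP}_{12}\mathbf{K}$ as the leading nontrivial Taylor coefficient; closedness of $\mathcal{K}$ then upgrades an asymptotic membership statement into the exact inclusion required by QI.
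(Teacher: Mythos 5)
Your argument is correct, but note that the paper itself does not prove Theorem~\ref{th:QI} at all: it states the result and defers entirely to the cited references \cite{furieri2019unified,rotkowitz2006characterization,QIconvexity}. What you have written is therefore a self-contained proof of a fact the paper only imports. The three ingredients all check out. The identity $-h(-\mathbf{K},\mathbf{CP}_{12})=h^{-1}(\mathbf{K},\mathbf{CP}_{12})$ via push-through does make the set equality in statement~(1) automatic for any subspace, so isolating convexity as the real content is the right reading. In $(2)\Rightarrow(3)$, your polarization step extracts exactly the symmetrized product $\mathbf{K}\mathbf{CP}_{12}\mathbf{L}+\mathbf{L}\mathbf{CP}_{12}\mathbf{K}\in\mathcal{K}$, which with $\mathbf{L}=\mathbf{K}(\mathbf{CP}_{12}\mathbf{K})^i$ collapses to $2\mathbf{K}(\mathbf{CP}_{12}\mathbf{K})^{i+1}$; this is all you need, and it correctly avoids claiming the (false in general) full strong-QI closure. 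In $(1)\Rightarrow(2)$, the midpoint of $h^{-1}(t\mathbf{K})$ and $h^{-1}(-t\mathbf{K})$ indeed kills the odd powers and leaves $t^2\,\mathbf{K}\mathbf{CP}_{12}\mathbf{K}$ as the leading term of $h(\mathbf{Q}(t))\in\mathcal{K}$; since everything here is a genuine polynomial in $t$ (thanks to nilpotency of $\mathbf{CP}_{12}\mathbf{K}$), you could even dispense with the limit and argue that a $\mathcal{K}$-valued polynomial has all coefficients in $\mathcal{K}$, matching the device you already used for polarization. Compared with the infinite-horizon proofs of Rotkowitz--Lall and Lessard--Lall, your finite-horizon version is more elementary precisely because nilpotency removes all convergence and stability issues; the price is that it is specific to the strictly causal finite-horizon setting of this paper, which is exactly the setting in which the theorem is invoked.
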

 
 It follows from Theorem~\ref{th:QI} that problem $\mathcal{P}_K$ is equivalent to a convex program, and in particular equivalent to
 \begin{equation}
\label{eq:convex_Q}
\min_{\mathbf{Q} \in \mathcal{K}} \tilde{J}(\mathbf{Q})\,,
\end{equation}
 if and only if QI holds.
 
 \vspace{0.1cm}


As we have observed in Section~\ref{se:introduction}, if the system model was unknown and we only had black-box simulation access to the cost function, we would not be able to optimize within the $\mathbf{Q}$ domain due to the mapping $h$ being unknown. Moreover, it would be highly desirable to step beyond the long-standing QI limitation, which is inherent to using convex programming in the $\mathbf{Q}$ domain.  Motivated as above, the rest of the paper develops a first-order gradient-descent method to solve $\mathcal{P}_K$ to global optimality directly in the $\mathbf{K}$ domain.  

 \section{First-order method for globally optimal sparse controllers\\ given QI}
 \label{sec:first-order}

 
  In this section we focus our attention on sparsity subspace constraints for the synthesis of distributed controllers complying with arbitrary information structures \cite{furieri2019unified}. For a sparsity constraint $\mathbf{K}\in \text{Sparse}(\mathbf{S})$, the set of stationary points for problem $\mathcal{P}_K$ is defined as follows:
 
  \begin{definition}
 \emph{Consider problem $\mathcal{P}_K$ with $\mathcal{K}= \text{Sparse}(\mathbf{S})$. A controller $\overline{\mathbf{K}} \in \text{Sparse}(\mathbf{S})$ is a stationary point for $\mathcal{P}_K$ if and only if 
 \begin{equation}
 \label{eq:structured_stationry}
 \nabla J(\overline{\mathbf{K}}) \in \text{Sparse}(\mathbf{S})^\perp = \text{Sparse}(\mathbf{S}^c)\,,
 \end{equation}
 where $\mathbf{S}^c$ is the binary matrix that has a $0$ wherever $\mathbf{S}$ has a $1$, and a $1$ wherever $\mathbf{S}$ has a $0$. }
 \end{definition}

 In general, a stationary point as in (\ref{eq:structured_stationry}) could be a local minimum, a local maximum or a saddle point for  $\mathcal{P}_K$. In the next lemma, we show that the set of stationary points for $\mathcal{P}_K$ corresponds to that of stationary points for problem (\ref{eq:convex_Q}) when strong QI holds. The proof is mainly based on \cite[Lemma~1]{colombino2017mutually}. We report it in the Appendix for completeness.
 
 \begin{lemma}
 \label{le:equivalence_stationary}
  Suppose that the subspace $\mathcal{K}$ is strongly \emph{QI} with respect to $\mathbf{CP}_{21}$, and let $\overline{\mathbf{K}} \in \mathcal{K}$. Also define $\overline{\mathbf{Q}}=h^{-1}(\overline{\mathbf{K}},\mathbf{CP}_{12})$. We have that
  \begin{equation*}
 \nabla \tilde{J}(\mathbf{\overline{\mathbf{Q}}}) \in \mathcal{K}^\perp \iff  \nabla J(\overline{\mathbf{K}}) \in \mathcal{K}^\perp\,.
  \end{equation*}
 \end{lemma}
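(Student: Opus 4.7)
The plan is to use the chain rule on the composition $\tilde{J}(\mathbf{Q}) = J(h(\mathbf{Q}, \mathbf{G}))$, where I abbreviate $\mathbf{G} := \mathbf{CP}_{12}$, and then to recast the two stationarity conditions as inner products with the image of a single linear operator that, under strong QI, leaves $\mathcal{K}$ invariant. The equivalence will then reduce to a purely algebraic statement about products of elements of $\mathcal{K}$.

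First I would compute the Fr\'{e}chet derivative of $h(\mathbf{Q}, \mathbf{G}) = (I + \mathbf{Q}\mathbf{G})^{-1}\mathbf{Q}$ at $\overline{\mathbf{Q}}$ by differentiating through the matrix inverse and simplifying with the algebraic identity $(I + \overline{\mathbf{Q}}\mathbf{G})^{-1} = I - \overline{\mathbf{K}}\mathbf{G}$ (and dually $(I + \mathbf{G}\overline{\mathbf{Q}})^{-1} = I - \mathbf{G}\overline{\mathbf{K}}$), which follows by checking $\overline{\mathbf{K}}(I + \overline{\mathbf{Q}}\mathbf{G}) = \overline{\mathbf{Q}}$. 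This yields the clean symmetric form
\[
Dh(\overline{\mathbf{Q}})[\mathbf{\Delta Q}] = (I - \overline{\mathbf{K}}\mathbf{G}) \, \mathbf{\Delta Q} \, (I - \mathbf{G}\overline{\mathbf{K}}).
\]
Equating $\langle \nabla\tilde{J}(\overline{\mathbf{Q}}), \mathbf{\Delta Q}\rangle_F = \langle \nabla J(\overline{\mathbf{K}}), Dh(\overline{\mathbf{Q}})[\mathbf{\Delta Q}]\rangle_F$ for arbitrary $\mathbf{\Delta Q}$ and using cyclic invariance of the trace gives $\nabla \tilde{J}(\overline{\mathbf{Q}}) = (I - \overline{\mathbf{K}}\mathbf{G})^\mathsf{T} \nabla J(\overline{\mathbf{K}}) (I - \mathbf{G}\overline{\mathbf{K}})^\mathsf{T}$. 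Introducing the bijective linear map $T(\mathbf{X}) := (I - \overline{\mathbf{K}}\mathbf{G})\mathbf{X}(I - \mathbf{G}\overline{\mathbf{K}})$, with inverse $T^{-1}(\mathbf{X}) = (I + \overline{\mathbf{Q}}\mathbf{G})\mathbf{X}(I + \mathbf{G}\overline{\mathbf{Q}})$, one more trace rearrangement yields the key identity
\[
\langle \nabla \tilde{J}(\overline{\mathbf{Q}}), \mathbf{Q}' \rangle_F = \langle \nabla J(\overline{\mathbf{K}}), T(\mathbf{Q}') \rangle_F \qquad \forall\, \mathbf{Q}' \in \mathbb{R}^{mN \times pN},
\]
so the claimed equivalence reduces to showing $T(\mathcal{K}) = \mathcal{K}$.

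To close, I would prove both inclusions $T(\mathcal{K}) \subseteq \mathcal{K}$ and $T^{-1}(\mathcal{K}) \subseteq \mathcal{K}$ by direct expansion. For the forward inclusion, $T(\mathbf{Q}') = \mathbf{Q}' - \overline{\mathbf{K}}\mathbf{G}\mathbf{Q}' - \mathbf{Q}'\mathbf{G}\overline{\mathbf{K}} + \overline{\mathbf{K}}\mathbf{G}\mathbf{Q}'\mathbf{G}\overline{\mathbf{K}}$ lies in $\mathcal{K}$ whenever $\mathbf{Q}' \in \mathcal{K}$: the two cross-terms follow from the strong QI hypothesis $\mathbf{K}_1\mathbf{G}\mathbf{K}_2 \in \mathcal{K}$ applied with pairs $(\overline{\mathbf{K}},\mathbf{Q}')$ and $(\mathbf{Q}',\overline{\mathbf{K}})$, while the quartic term is handled by two nested applications, first grouping $\mathbf{Q}'\mathbf{G}\overline{\mathbf{K}} \in \mathcal{K}$ and then $\overline{\mathbf{K}}\mathbf{G}(\mathbf{Q}'\mathbf{G}\overline{\mathbf{K}}) \in \mathcal{K}$. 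For the reverse inclusion, since strong QI implies QI, Theorem~\ref{th:QI} guarantees $\overline{\mathbf{Q}} = h^{-1}(\overline{\mathbf{K}}, \mathbf{G}) \in \mathcal{K}$, and the exact same four-term expansion of $T^{-1}(\mathbf{K}')$ with $\overline{\mathbf{Q}}$ in place of $\overline{\mathbf{K}}$ lands in $\mathcal{K}$ by the same two nested applications of strong QI. The main subtlety is the quartic term: strong QI is precisely what is needed here (single-argument QI alone does not suffice), and the rewriting $(I + \overline{\mathbf{Q}}\mathbf{G})^{-1} = I - \overline{\mathbf{K}}\mathbf{G}$ is the key simplification that makes the role of strong QI transparent.
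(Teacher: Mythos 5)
Your proof is correct and rests on the same core computation as the paper's: both differentiate $h$ at $\overline{\mathbf{Q}}$ and invoke strong QI to show that the resulting perturbation direction stays in $\mathcal{K}$ (your $T(\mathbf{Q}')$ is exactly the paper's direction $\tilde{\mathbf{K}}$ in (\ref{eq:K_tilde}), after the simplification $(I+\overline{\mathbf{Q}}\mathbf{CP}_{12})^{-1}=I-\overline{\mathbf{K}}\mathbf{CP}_{12}$). The packaging differs in a useful way, though. The paper argues by contradiction with directional derivatives and a first-order Taylor expansion, proves only one implication, and dismisses the converse as ``analogous''; your adjoint identity $\langle\nabla\tilde J(\overline{\mathbf{Q}}),\mathbf{Q}'\rangle_F=\langle\nabla J(\overline{\mathbf{K}}),T(\mathbf{Q}')\rangle_F$ reduces the whole lemma to the single algebraic statement $T(\mathcal{K})=\mathcal{K}$, delivering both directions at once and making explicit where the argument needs $\overline{\mathbf{Q}}\in\mathcal{K}$ (via Theorem~\ref{th:QI}, exactly as the paper implicitly needs it when it expands $(I+\overline{\mathbf{Q}}\mathbf{CP}_{12})^{-1}$ as a polynomial in $\overline{\mathbf{Q}}\mathbf{CP}_{12}$). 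Two small remarks, neither affecting correctness: in finite dimensions the reverse inclusion $T^{-1}(\mathcal{K})\subseteq\mathcal{K}$ is not strictly needed, since $T$ bijective and $T(\mathcal{K})\subseteq\mathcal{K}$ already force $T(\mathcal{K})=\mathcal{K}$ by a dimension count; and your parenthetical that strong QI is needed ``precisely'' for the quartic term is slightly off---the cross terms $\overline{\mathbf{K}}\mathbf{CP}_{12}\mathbf{Q}'$ with $\overline{\mathbf{K}}\neq\mathbf{Q}'$ already require strong rather than plain QI for a general subspace.
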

 
 Notice that since any QI sparsity subspace is also strongly QI \cite{rotkowitz2006characterization}, Lemma~\ref{le:equivalence_stationary} holds for all the arbitrary QI information structures characterized in \cite{furieri2019unified}. 
 \subsection{Global optimality of gradient-descent}
By exploiting Lemma~\ref{le:equivalence_stationary} our first result establishes that if $\text{Sparse}(\mathbf{S})$ is QI with respect to $\mathbf{CP}_{12}$, any stationary point of $\mathcal{P}_K$ is a global optimum.
\begin{theorem}
\label{th:QI_uniquely_stationary}
Suppose that $\text{\emph{Sparse}}(\mathbf{S})$ is \emph{QI} with respect to $\mathbf{CP}_{12}$ and let $\mathbf{K}^\star \in \text{\emph{Sparse}}(\mathbf{S})$ be a stationary point of $J(\mathbf{K})$. Then, 
\begin{align*}
\mathbf{K}^\star \in \arg \min_{\mathbf{K} \in \text{\emph{Sparse}}(\mathbf{S})} J(\mathbf{K})\,.
\end{align*}
\end{theorem}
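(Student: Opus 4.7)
The plan is to translate the problem from the non-convex $\mathbf{K}$-domain into the strictly convex $\mathbf{Q}$-domain via the bijection $h$, exploit the QI assumption to ensure the translated feasible set is still $\text{Sparse}(\mathbf{S})$, and finally invoke strict convexity to conclude global optimality. The main ingredients are all available in the excerpt: Lemma~\ref{le:strictly_convex} gives strict convexity of $\tilde{J}$; Theorem~\ref{th:QI} identifies $h^{-1}(\mathcal{K},\mathbf{CP}_{12})$ with $\mathcal{K}$ under QI; and Lemma~\ref{le:equivalence_stationary} matches stationary points across the two domains under strong QI, which is free here because a QI sparsity subspace is automatically strongly QI.

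Concretely, first I would set $\mathbf{Q}^\star = h^{-1}(\mathbf{K}^\star,\mathbf{CP}_{12})$. By Theorem~\ref{th:QI}, QI of $\text{Sparse}(\mathbf{S})$ yields $h^{-1}(\text{Sparse}(\mathbf{S}),\mathbf{CP}_{12}) = \text{Sparse}(\mathbf{S})$, so in particular $\mathbf{Q}^\star \in \text{Sparse}(\mathbf{S})$. Next, since $\text{Sparse}(\mathbf{S})$ is QI it is strongly QI, so Lemma~\ref{le:equivalence_stationary} applies and the stationarity condition $\nabla J(\mathbf{K}^\star)\in\text{Sparse}(\mathbf{S})^\perp$ is equivalent to $\nabla \tilde{J}(\mathbf{Q}^\star)\in\text{Sparse}(\mathbf{S})^\perp$.

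Now the problem $\min_{\mathbf{Q}\in\text{Sparse}(\mathbf{S})}\tilde{J}(\mathbf{Q})$ is the minimization of a strictly convex quadratic function over a linear subspace. For such a problem, the first-order optimality condition $\nabla \tilde{J}(\mathbf{Q}^\star)\in\text{Sparse}(\mathbf{S})^\perp$ is both necessary and sufficient for global optimality, so $\mathbf{Q}^\star \in \arg\min_{\mathbf{Q}\in\text{Sparse}(\mathbf{S})} \tilde{J}(\mathbf{Q})$. Finally, using the identity $J(\mathbf{K}) = \tilde{J}(h^{-1}(\mathbf{K},\mathbf{CP}_{12}))$ from Lemma~\ref{le:strictly_convex} together with the fact that $h^{-1}(\cdot,\mathbf{CP}_{12})$ is a bijection from $\text{Sparse}(\mathbf{S})$ onto itself (again by Theorem~\ref{th:QI}), the minimum of $J$ over $\text{Sparse}(\mathbf{S})$ is attained exactly at $h(\mathbf{Q}^\star,\mathbf{CP}_{12}) = \mathbf{K}^\star$, giving the claim.

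The only step that deserves care, and which I would treat as the main obstacle, is the last translation: one needs to verify that the bijection $h(\cdot,\mathbf{CP}_{12})$ does indeed map $\text{Sparse}(\mathbf{S})$ onto $\text{Sparse}(\mathbf{S})$ (not just into it), so that for every $\mathbf{K}\in\text{Sparse}(\mathbf{S})$ there is a corresponding $\mathbf{Q}\in\text{Sparse}(\mathbf{S})$ with equal cost, ensuring that any improvement on $J$ would translate into an improvement on $\tilde{J}$ within $\text{Sparse}(\mathbf{S})$. This is exactly what item~3 of Theorem~\ref{th:QI} guarantees under the QI hypothesis, so the argument closes cleanly; everything else is routine gluing of the preceding lemmas.
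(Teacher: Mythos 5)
Your proof is correct and follows essentially the same route as the paper's: both hinge on Theorem~\ref{th:QI} (so that $h^{-1}$ maps $\text{Sparse}(\mathbf{S})$ onto itself under QI), Lemma~\ref{le:equivalence_stationary} to transport stationarity between the $\mathbf{K}$- and $\mathbf{Q}$-domains, and convexity of $\tilde{J}$ over the subspace to certify global optimality. The only difference is presentational: you argue directly from the given $\mathbf{K}^\star$ forward to $\mathbf{Q}^\star$ and back, whereas the paper starts from a $\mathbf{Q}$-stationary point and rules out suboptimal $\mathbf{K}$-stationary points by contradiction, and your remark that surjectivity of $h$ onto $\text{Sparse}(\mathbf{S})$ is the step that needs Theorem~\ref{th:QI} is exactly the right point of care.
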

\begin{proof}
\emph{By Theorem~\ref{th:QI}, $\mathcal{P}_K$ is equivalent to (\ref{eq:convex_Q}). Since problem (\ref{eq:convex_Q}) is convex, every $\mathbf{Q}^\star \in \text{Sparse}(\mathbf{S})$ such that $\nabla \tilde{J}(\mathbf{Q}^\star) \in \text{Sparse}(\mathbf{S}^c)$ (that is, $\mathbf{Q}^\star$ is a stationary point) is a global optimum and thus achieves the optimal cost $J^\star$. Let $\mathbf{K}^\star=h(\mathbf{Q}^\star,\mathbf{CP}_{12})$. Now remember that $\text{Sparse}(\mathbf{S})$ is QI if and only if it is strongly QI \cite{rotkowitz2006characterization}. By Lemma~\ref{le:equivalence_stationary}  $\nabla J(\mathbf{K}^\star) \in \text{Sparse}(\mathbf{S}^c)$,  and hence $\mathbf{K}^\star$ is a stationary point for $J(\mathbf{K})$. Since $\tilde{J}(\mathbf{Q})=J(h(\mathbf{Q},\mathbf{CP}_{12}))$ for every $\mathbf{Q}$ by definition, we have that  $J(\mathbf{K}^\star)=\tilde{J}(\mathbf{Q}^\star)=J^\star$ and thus $\mathbf{K}^\star$ is optimal.  By Lemma~\ref{le:equivalence_stationary}, there can be no other stationary point $\overline{\mathbf{K}} \in \text{Sparse}(\mathbf{S}^c)$ such that $J(\overline{\mathbf{K}})=\overline{J}>J^\star$; otherwise, $\overline{\mathbf{Q}}=h^{-1}(\overline{\mathbf{K}},\mathbf{PC}_{12})$ would also be a stationary point for problem (\ref{eq:convex_Q}) with cost $\overline{J}>J^\star$, which is a contradiction due to (\ref{eq:convex_Q}) being convex.}
\end{proof}

\begin{remark}
\label{re:generalization}
\emph{Theorem~\ref{th:QI_uniquely_stationary} trivially generalizes to any subspace constraint $\mathcal{K}$ that is  strongly QI, as the key Lemma~\ref{le:equivalence_stationary} holds for any strongly QI subspace. In Theorem~\ref{th:QI_uniquely_stationary}, we decided to specialize the result to the most common case of sparsity constraints in the interest of clarity.} 
\end{remark}



Theorem~\ref{th:QI_uniquely_stationary} leads to a fundamental insight: under QI sparsity constraints, if we can find any stationary point of the generally non-convex function $J(\mathbf{K})$, this point is certified to be a globally optimal solution to $\mathcal{P}_K$. Based on this observation, we develop a gradient-descent method that solves $\mathcal{P}_K$ to global optimality for QI sparsity constraints. 
\begin{theorem}
\label{th:GD}
Suppose $\text{\emph{Sparse}}(\mathbf{S})$ is \emph{QI} with respect to $\mathbf{CP}_{12}$. Let $\mathbf{K}_0 \in \text{\emph{Sparse}}(\mathbf{S})$ be an initial output-feedback control policy, and consider the iteration
\begin{align}
&\mathbf{K}_{t+1}=\mathbf{K}_t-\eta_t \nabla J(\mathbf{K}_t) \odot \mathbf{S}\,.\label{eq:iteration}
\end{align}
Then, $\mathbf{K}_t \in \text{\emph{Sparse}}(\mathbf{S})$ for every $t$ and there exists  $\eta_t$ for every $t$ such that
\begin{equation*}
\lim_{t \rightarrow \infty}J(\mathbf{K}_t)=J^\star\,,
\end{equation*}
where $J^\star$ is the optimal value of problem $\mathcal{P}_K$.
\end{theorem}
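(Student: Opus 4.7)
The plan is to frame the update rule as a projected gradient method on a subspace and then invoke classical convergence results for smooth (but non-convex) objectives with bounded sublevel sets, closing the argument by appealing to Theorem~\ref{th:QI_uniquely_stationary}.

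First I would verify invariance of the sparsity pattern: since $\mathbf{K}_0 \in \text{Sparse}(\mathbf{S})$ and $\nabla J(\mathbf{K}_t) \odot \mathbf{S} \in \text{Sparse}(\mathbf{S})$ by construction of the Hadamard product with the binary mask $\mathbf{S}$, an immediate induction gives $\mathbf{K}_t \in \text{Sparse}(\mathbf{S})$ for all $t$. Note that the Hadamard multiplication by $\mathbf{S}$ is precisely the orthogonal projection $\Pi_{\text{Sparse}(\mathbf{S})}(\cdot)$, so \eqref{eq:iteration} is the standard projected-gradient iteration on the subspace $\text{Sparse}(\mathbf{S})$, and its fixed points are exactly the stationary points as defined in \eqref{eq:structured_stationry}.

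Next I would establish the ingredients needed for convergence to a stationary point. By Remark~\ref{re:polynomial}, $J$ is a polynomial in the entries of $\mathbf{K}$, hence $C^{\infty}$; by Lemma~\ref{le:bounded_level_sets}, the sublevel set $\mathcal{L}=\{\mathbf{K} : J(\mathbf{K}) \le J(\mathbf{K}_0)\}$ is bounded, and since $J$ is continuous it is also closed, hence compact. On any neighborhood of $\mathcal{L}$, $\nabla J$ is therefore Lipschitz with some constant $L$. A standard choice of step size $\eta_t$ that guarantees descent is then available: either a constant $\eta_t \in (0, 1/L)$, or a backtracking/Armijo rule. With such $\eta_t$, a textbook argument yields the monotone descent inequality
\begin{equation*}
J(\mathbf{K}_{t+1}) \le J(\mathbf{K}_t) - c\, \eta_t \,\norm{\nabla J(\mathbf{K}_t) \odot \mathbf{S}}_F^{2}
\end{equation*}
for some $c>0$, so $\{\mathbf{K}_t\} \subset \mathcal{L}$ stays in the compact set and $\sum_t \eta_t \norm{\nabla J(\mathbf{K}_t) \odot \mathbf{S}}_F^{2} < \infty$. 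This forces $\nabla J(\mathbf{K}_t) \odot \mathbf{S} \to 0$.

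Finally I would conclude by compactness and uniqueness of the optimal value. Since $\{\mathbf{K}_t\}$ lies in the compact set $\mathcal{L}$, it has at least one limit point $\mathbf{K}_\infty$; continuity of $\nabla J$ gives $\nabla J(\mathbf{K}_\infty) \odot \mathbf{S} = 0$, i.e.\ $\mathbf{K}_\infty$ is a stationary point of $\mathcal{P}_K$. Theorem~\ref{th:QI_uniquely_stationary} then certifies that $\mathbf{K}_\infty$ is globally optimal, so $J(\mathbf{K}_\infty)=J^\star$. Combining this with monotone decrease of $J(\mathbf{K}_t)$ along the iteration yields $\lim_{t\to\infty} J(\mathbf{K}_t) = J^\star$. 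The main obstacle, and where I would focus the rigor, is justifying the existence of an admissible step size rule globally: because $J$ is a polynomial its gradient is not globally Lipschitz, so the Lipschitz constant $L$ must be derived on an enlargement of $\mathcal{L}$ and a backtracking rule invoked to ensure the iterates never leave that enlargement; once this technicality is handled, the remaining steps are routine.
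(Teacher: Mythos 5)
Your proposal is correct and follows essentially the same route as the paper's proof: identify the iteration as gradient descent restricted to the subspace $\text{Sparse}(\mathbf{S})$, use Lemma~\ref{le:bounded_level_sets} plus the polynomial structure of $J$ to get a Lipschitz gradient on a bounded neighborhood of the compact sublevel set, drive $\Pi_{\text{Sparse}(\mathbf{S})}(\nabla J(\mathbf{K}_t))$ to zero via a line-search rule, and invoke Theorem~\ref{th:QI_uniquely_stationary} to certify global optimality of the resulting stationary point. The only differences are cosmetic: you use an Armijo/backtracking rule with the descent lemma where the paper uses the Wolfe conditions with Zoutendijk's theorem, and your passage to a limit point of the iterates combined with monotone decrease of $J(\mathbf{K}_t)$ is, if anything, slightly more careful than the paper's direct assertion that $\bm{\alpha}_t$ converges.
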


The proof of Theorem~\ref{th:GD} uses Lemma~\ref{le:bounded_level_sets} and the following four Lemmas. The proofs of Lemmas~\ref{le:Zoutendijk}, \ref{le:existence_Wolfe} and \ref{le:Lipschitzianity} can be found in \cite[Theorem~3.2]{nocedal2006numerical}, \cite[Lemma~3.1]{nocedal2006numerical} and \cite[Proposition~5.7]{aragon2019nonlinear} respectively. We prove Lemma~\ref{le:unconstrained} in the Appendix.
\begin{lemma}
\label{le:Zoutendijk}
Let $f:\mathbb{R}^n \rightarrow \mathbb{R}$ be bounded below and consider the iteration
\begin{equation}
\label{eq:iteration_zoutendijk}
x_{t+1}=x_t-\eta_t \nabla f(x_t)\,,
\end{equation}
where $\eta_t$ satisfies the Wolfe conditions:
\begin{equation}
f(x_t-\eta_t\nabla f(x_t))\leq f(x_t)-c_1 \eta_t\|\nabla f(x_t)\|^2_2\,, \label{eq:Wolfe_1}
\end{equation}
\begin{equation}
-\nabla f(x_t-\eta_t \nabla f(x_t))^\mathsf{T} \nabla f(x_t) \geq -c_2 \|\nabla f(x_t)\|_2^2\,,\label{eq:Wolfe_2}
\end{equation}
 for some $0<c_1<c_2<1$ and every $t$. Let $f$ be continuously differentiable in an open set $\mathcal{U}$ containing the sublevel set $\mathcal{L}=\{x:f(x)\leq f(x_0)\}$, where $x_0$ is the starting point of the iteration (\ref{eq:iteration_zoutendijk}). Assume that $\nabla f$ is Lipschitz continuous on $\mathcal{U}$.   Then, $\lim_{t\rightarrow \infty}\nabla f(x_t) =0\,.
$

\end{lemma}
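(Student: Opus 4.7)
The plan is to convert the two Wolfe conditions into a per-iteration guaranteed decrease of $f$, then telescope. Concretely, the curvature (second Wolfe) condition combined with the Lipschitz property of $\nabla f$ will give a uniform positive lower bound on $\eta_t$, and feeding this back into the sufficient-decrease (first Wolfe) condition will produce an inequality of the form $f(x_{t+1}) \leq f(x_t) - \alpha \,\|\nabla f(x_t)\|_2^2$ with $\alpha > 0$ independent of $t$. Summing and using $\inf f > -\infty$ then forces $\sum_t \|\nabla f(x_t)\|_2^2 < \infty$, hence $\nabla f(x_t) \to 0$.

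First I would check that every iterate stays in $\mathcal{U}$, so that the Lipschitz hypothesis applies uniformly along the trajectory. By induction: if $x_t \in \mathcal{L}$, then the first Wolfe condition (\ref{eq:Wolfe_1}) yields $f(x_{t+1}) \leq f(x_t) \leq f(x_0)$, so $x_{t+1} \in \mathcal{L} \subseteq \mathcal{U}$. In addition, I would like to argue that we may assume every segment $[x_t, x_{t+1}]$ lies in $\mathcal{U}$ as well; this can be done by taking $\mathcal{U}$ to be an open set that contains not only $\mathcal{L}$ but also, say, a tube around it on which $\nabla f$ is Lipschitz with some constant $L$.

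Next I would extract the step-size bound. Rewriting (\ref{eq:Wolfe_2}) by subtracting $\|\nabla f(x_t)\|_2^2$ from both sides gives
\begin{equation*}
\bigl(\nabla f(x_{t+1}) - \nabla f(x_t)\bigr)^{\mathsf T}\bigl(-\nabla f(x_t)\bigr) \;\geq\; (1-c_2)\,\|\nabla f(x_t)\|_2^2.
\end{equation*}
On the other hand, the Lipschitz bound together with Cauchy--Schwarz and $x_{t+1}-x_t = -\eta_t\nabla f(x_t)$ produces
\begin{equation*}
\bigl(\nabla f(x_{t+1}) - \nabla f(x_t)\bigr)^{\mathsf T}\bigl(-\nabla f(x_t)\bigr) \;\leq\; \|\nabla f(x_{t+1}) - \nabla f(x_t)\|_2\,\|\nabla f(x_t)\|_2 \;\leq\; L\,\eta_t\,\|\nabla f(x_t)\|_2^2.
\end{equation*}
Combining the two displays gives $\eta_t \geq (1-c_2)/L$, uniformly in $t$. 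Plugging this into (\ref{eq:Wolfe_1}) yields
\begin{equation*}
f(x_{t+1}) \;\leq\; f(x_t) - \frac{c_1(1-c_2)}{L}\,\|\nabla f(x_t)\|_2^2.
\end{equation*}

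The final step is telescoping: summing the inequality from $t=0$ to $T-1$ and using that $f$ is bounded below by some $f_{\min}$ gives
\begin{equation*}
\frac{c_1(1-c_2)}{L}\sum_{t=0}^{T-1}\|\nabla f(x_t)\|_2^2 \;\leq\; f(x_0) - f(x_T) \;\leq\; f(x_0) - f_{\min},
\end{equation*}
for every $T$. Letting $T \to \infty$ shows that $\sum_{t=0}^{\infty}\|\nabla f(x_t)\|_2^2$ converges, so its general term tends to zero, i.e., $\nabla f(x_t)\to 0$.

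The only non-routine step is the step-size bound $\eta_t \geq (1-c_2)/L$; everything else is bookkeeping. The subtle point there is that the Lipschitz estimate is applied at two iterates $x_t, x_{t+1}$, which is why I would first secure that the entire iterate sequence (and ideally the segments joining consecutive iterates) remains inside the open set $\mathcal{U}$ where the Lipschitz constant $L$ is available.
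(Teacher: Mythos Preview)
Your argument is correct and is precisely the standard Zoutendijk argument that the paper invokes by citing \cite[Theorem~3.2]{nocedal2006numerical}; the paper does not supply its own proof of this lemma. Two cosmetic remarks: the manipulation you describe as ``subtracting $\|\nabla f(x_t)\|_2^2$ from both sides'' is actually an addition, and the worry about the segments $[x_t,x_{t+1}]$ lying in $\mathcal{U}$ is unnecessary since Lipschitz continuity of $\nabla f$ on $\mathcal{U}$ already gives $\|\nabla f(x_{t+1})-\nabla f(x_t)\|_2\le L\|x_{t+1}-x_t\|_2$ once both endpoints are in $\mathcal{U}$.
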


\begin{lemma}
\label{le:existence_Wolfe}
Suppose $f: \mathbb{R}^n \rightarrow \mathbb{R}$ is continuously differentiable and bounded below. Then, for any $0<c_1<c_2<1$, there exist intervals of step lengths satisfying the Wolfe conditions (\ref{eq:Wolfe_1})-(\ref{eq:Wolfe_2}).
\end{lemma}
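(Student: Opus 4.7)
The plan is to reduce to a one-dimensional problem along the negative gradient direction. Assuming $\nabla f(x_t) \neq 0$ (otherwise both Wolfe conditions hold trivially for every $\eta$), I set $\phi(\eta) := f(x_t - \eta \nabla f(x_t))$ and $\ell(\eta) := f(x_t) - c_1 \eta \|\nabla f(x_t)\|_2^2$ for $\eta \geq 0$. Condition (\ref{eq:Wolfe_1}) then reads $\phi(\eta) \leq \ell(\eta)$, while (\ref{eq:Wolfe_2}) is equivalent to $\phi'(\eta) \geq -c_2 \|\nabla f(x_t)\|_2^2$. Continuous differentiability of $f$ makes both $\phi$ and $\phi'$ continuous, and $\ell$ is affine with strictly negative slope.

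The first key step is to locate the smallest positive crossing of $\phi$ and $\ell$. Since $\phi(0) = \ell(0) = f(x_t)$ and the one-sided derivatives satisfy $\phi'(0) = -\|\nabla f(x_t)\|_2^2 < \ell'(0) = -c_1 \|\nabla f(x_t)\|_2^2$ (using $c_1 < 1$), continuity yields $\phi < \ell$ on some right neighborhood of $0$. Boundedness below of $f$ keeps $\phi$ bounded below while $\ell(\eta) \to -\infty$ as $\eta \to \infty$, so the intermediate value theorem produces a smallest $\bar\eta > 0$ with $\phi(\bar\eta) = \ell(\bar\eta)$, and (\ref{eq:Wolfe_1}) holds strictly on $(0, \bar\eta)$.

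The second key step is a mean value theorem argument for the curvature condition. Applying the MVT to $\phi - \ell$ on $[0, \bar\eta]$, which vanishes at both endpoints, gives an $\eta^\star \in (0, \bar\eta)$ with $\phi'(\eta^\star) = \ell'(\eta^\star) = -c_1 \|\nabla f(x_t)\|_2^2$. Since $c_1 < c_2$, this unpacks to $-\nabla f(x_t - \eta^\star \nabla f(x_t))^\mathsf{T} \nabla f(x_t) = -c_1 \|\nabla f(x_t)\|_2^2 > -c_2 \|\nabla f(x_t)\|_2^2$, so (\ref{eq:Wolfe_2}) holds strictly at $\eta^\star$. Because both Wolfe conditions are satisfied strictly and $\phi, \phi'$ are continuous, they persist on an open neighborhood of $\eta^\star$, which is the required interval of admissible step lengths.

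The one delicate point is the existence of $\bar\eta$: this is exactly where the boundedness-below hypothesis on $f$ is indispensable, since without it $\ell$ could in principle remain above $\phi$ for all $\eta \geq 0$ and no Armijo-compatible step would exist. Everything else is a routine combination of the mean value theorem, the strict inequality $c_1 < c_2$, and continuity of $\nabla f$.
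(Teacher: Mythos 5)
Your argument is correct and is essentially the standard proof of this classical result (the paper does not reprove it but cites \cite[Lemma~3.1]{nocedal2006numerical}, whose argument is exactly this: boundedness below forces the Armijo line to cross $\phi$, the first crossing plus the mean value theorem yields a point where $\phi'=-c_1\|\nabla f(x_t)\|_2^2>-c_2\|\nabla f(x_t)\|_2^2$, and continuity of $\nabla f$ turns the strict inequalities into an interval). Your use of Rolle's theorem on $\phi-\ell$ instead of the mean value theorem on $\phi$ is a cosmetic variation of the same step, and your handling of the degenerate case $\nabla f(x_t)=0$ and of the existence of the smallest crossing $\bar\eta$ is sound.
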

\begin{lemma}
\label{le:Lipschitzianity}
Let $f\hspace{-0.1cm}:\hspace{-0.05cm}\mathbb{R}^n \hspace{-0.1cm}\rightarrow\hspace{-0.1cm} \mathbb{R}$ be twice continuously differentiable on an open convex set $\mathcal{U} \subseteq \mathbb{R}^n$, and suppose that  $\nabla^2f$ is bounded on $\mathcal{U}$. Then, $\nabla f$ is Lipschitz continuous on $\mathcal{U}$.
\end{lemma}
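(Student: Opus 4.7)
The plan is to establish Lipschitz continuity of $\nabla f$ via the fundamental theorem of calculus applied along straight line segments in $\mathcal{U}$, exploiting convexity of $\mathcal{U}$ to keep every such segment inside the domain on which the Hessian bound is known to hold. Concretely, I would fix arbitrary $x,y \in \mathcal{U}$ and consider the parameterized segment $\gamma(t) := x + t(y-x)$, which lies in $\mathcal{U}$ for every $t \in [0,1]$ by convexity. Introducing the auxiliary curve $g : [0,1] \to \mathbb{R}^n$ defined by $g(t) := \nabla f(\gamma(t))$, the chain rule combined with $f \in C^2$ yields $g'(t) = \nabla^2 f(\gamma(t))(y-x)$, which is continuous in $t$ since $\nabla^2 f$ is continuous on $\mathcal{U}$ and $\gamma$ is smooth.

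The second step is to integrate componentwise to write
\begin{equation*}
\nabla f(y) - \nabla f(x) \;=\; g(1) - g(0) \;=\; \int_0^1 \nabla^2 f(x + t(y-x))(y-x)\, dt,
\end{equation*}
then take the Euclidean norm on both sides, move the norm inside the integral (justified by continuity of the vector-valued integrand on the compact interval $[0,1]$), and split the matrix--vector product submultiplicatively. This yields
\begin{equation*}
\norm{\nabla f(y) - \nabla f(x)}_2 \;\leq\; \left(\int_0^1 \norm{\nabla^2 f(\gamma(t))}_{\mathrm{op}}\, dt\right)\norm{y-x}_2,
\end{equation*}
where $\norm{\cdot}_{\mathrm{op}}$ denotes the operator norm induced by $\norm{\cdot}_2$. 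The hypothesis that $\nabla^2 f$ is bounded on $\mathcal{U}$ supplies a constant $L < \infty$ with $\norm{\nabla^2 f(z)}_{\mathrm{op}} \leq L$ for every $z \in \mathcal{U}$, and hence $\norm{\nabla f(y) - \nabla f(x)}_2 \leq L\norm{y-x}_2$, which is exactly Lipschitz continuity with constant $L$.

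Two minor points deserve a brief mention but no substantive effort. First, the statement does not specify which matrix norm is meant by ``$\nabla^2 f$ is bounded''; since all matrix norms on $\mathbb{R}^{n \times n}$ are equivalent for fixed $n$, any bound rescales to a bound on the operator norm, so the argument goes through up to the value of $L$. Second, interchanging norm and integral and applying the submultiplicative inequality $\norm{Av}_2 \leq \norm{A}_{\mathrm{op}}\norm{v}_2$ are standard facts whose only continuity prerequisite is supplied by $f \in C^2$ restricted to the compact segment $\gamma([0,1])$. I do not anticipate a serious obstacle: this is a textbook argument, consistent with the paper deferring to \cite{aragon2019nonlinear}, and the only stylistic choice is whether to present it via the integral form above or equivalently via the vector-valued mean value inequality.
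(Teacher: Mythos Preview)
Your proposal is correct and is the standard textbook argument. The paper does not supply its own proof of this lemma at all; it simply cites \cite[Proposition~5.7]{aragon2019nonlinear}, so there is nothing substantive to compare against, and your integral-form mean value inequality is exactly the argument one expects to find behind that citation.
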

\begin{lemma}
\label{le:unconstrained}
Let $f:\mathbb{R}^n\rightarrow \mathbb{R}$ be a continuously differentiable function and $\mathcal{K} \subseteq \mathbb{R}^n$ be a subspace. Let $\underline{f}:\mathbb{R}^r \rightarrow \mathbb{R}$ be defined as $\underline{f}(\bm{\alpha})=f(M\bm{\alpha})$, where $\text{\emph{Im}}(M)=\mathcal{K}$ and $r$ is the dimension of $\mathcal{K}$. Then:
\begin{enumerate}
\item $\min_{\mathbf{x} \in \mathcal{K}} f(\mathbf{x})=\min_{\bm{\alpha} \in \mathbb{R}^r}\underline{f}(\bm{\alpha})$.
\item $\nabla \underline{f}(\bm{\alpha})=0 \iff \nabla f(M \bm{\alpha}) \in \mathcal{K}^\perp$.
\end{enumerate}

\end{lemma}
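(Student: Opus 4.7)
The plan is to prove the two claims in turn using only elementary linear algebra together with the chain rule; the lemma is essentially a direct identification of the restriction of $f$ to $\mathcal{K}$ with a function of a coordinate vector $\bm{\alpha}$ through the column map $M$.

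For statement $(1)$, I would first observe that since $\text{Im}(M)=\mathcal{K}$, as $\bm{\alpha}$ ranges over $\mathbb{R}^r$ the point $M\bm{\alpha}$ ranges over all of $\mathcal{K}$, and for every $\mathbf{x}\in\mathcal{K}$ there is at least one $\bm{\alpha}\in\mathbb{R}^r$ with $\mathbf{x}=M\bm{\alpha}$. Hence the sets of values $\{\underline{f}(\bm{\alpha}):\bm{\alpha}\in\mathbb{R}^r\}$ and $\{f(\mathbf{x}):\mathbf{x}\in\mathcal{K}\}$ coincide, so their minima are equal. (If the minimum is not attained, the same argument shows equality of the infima; the statement should be read in this sense.)

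For statement $(2)$, I would apply the chain rule to $\underline{f}(\bm{\alpha})=f(M\bm{\alpha})$ to obtain
\begin{equation*}
\nabla \underline{f}(\bm{\alpha})=M^{\mathsf{T}}\nabla f(M\bm{\alpha}).
\end{equation*}
Consequently $\nabla \underline{f}(\bm{\alpha})=0$ if and only if $\nabla f(M\bm{\alpha})\in\text{Ker}(M^{\mathsf{T}})$. The standard identity $\text{Ker}(M^{\mathsf{T}})=\text{Im}(M)^\perp=\mathcal{K}^\perp$ then yields the equivalence.

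There is no substantive obstacle: both facts are one-line consequences of surjectivity of $\bm{\alpha}\mapsto M\bm{\alpha}$ onto $\mathcal{K}$ and of the kernel/image duality. The only point to mention carefully is that, because $r=\dim(\mathcal{K})$ and $\text{Im}(M)=\mathcal{K}$, the matrix $M$ has full column rank; this is not needed for either statement to hold, but it is consistent with the setting and guarantees that the representation $\mathbf{x}=M\bm{\alpha}$ of each $\mathbf{x}\in\mathcal{K}$ is unique, which makes the reparametrization bijective onto $\mathcal{K}$.
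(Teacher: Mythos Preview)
Your proposal is correct and follows essentially the same approach as the paper: for part~(1) you use surjectivity of $\bm{\alpha}\mapsto M\bm{\alpha}$ onto $\mathcal{K}$ to identify the two sets of values, and for part~(2) you apply the chain rule $\nabla\underline{f}(\bm{\alpha})=M^{\mathsf{T}}\nabla f(M\bm{\alpha})$ together with $\text{Ker}(M^{\mathsf{T}})=\text{Im}(M)^\perp=\mathcal{K}^\perp$, which is exactly the paper's argument.
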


\vspace{6pt}

We are now ready to prove Theorem~\ref{th:GD}.

\begin{proof}[Theorem~\ref{th:GD}]
\emph{Denote $\mathbf{k}=\text{vec}(\mathbf{K})$ and let $f:\mathbb{R}^{mpN^2 } \rightarrow \mathbb{R}$ be the function such that $f(\mathbf{k})=J(\mathbf{K})$ for every $\mathbf{K}\in \mathbb{R}^{mN\times pN}$. Clearly, if $\mathbf{k}_0=\text{vec}(\mathbf{K}_0)$ the iterations of (\ref{eq:iteration}) are equivalent to those of
\begin{align}
\label{eq:iteration_kvec}
\mathbf{k}_{t+1}&=\mathbf{k}_t-\eta_t \nabla f(\mathbf{k}_t)\odot \text{vec}(\mathbf{S})=\mathbf{k}_t-\eta_t \Pi_{\text{Sparse}(\text{vec}(\mathbf{S}))}\left(\nabla f(\mathbf{k}_t)\right)\,.
\end{align}
Now let $M$ be such that its columns are an orthonormal basis of $\text{Sparse}(\text{vec}(\mathbf{S}))$. Consider the iteration 
\begin{equation}
\label{eq:iteration_alpha}
\bm{\alpha}_{t+1}=\bm{\alpha}_t-\eta_t \nabla \underline{f}(\bm{\alpha}_t)\,,
\end{equation}
where $\underline{f}$ is such that $\underline{f}(\bm{\alpha})=f(M\bm{\alpha})$ for every $\bm{\alpha}$. Let $\mathbf{k}_0=M\bm{\alpha}_0$ and suppose that $\mathbf{k}_t=M \bm{\alpha}_t$. Then, by (\ref{eq:iteration_kvec}), (\ref{eq:iteration_alpha}) and noting that $M^\mathsf{T}M=I$ and $\nabla \underline{f}(\mathbf{\bm{\alpha}}_t)=M^\mathsf{T}\nabla f(M\bm{\alpha}_t)$:
\begin{align*}
\mathbf{k}_{t+1}&=M\bm{\alpha}_t-\eta_t M(M^\mathsf{T}M)^{-1}M^\mathsf{T}\nabla f(M\bm{\alpha}_t)=M\bm{\alpha}_{t+1}\,.
\end{align*}
 We conclude by induction that  $\text{vec}(\mathbf{K}_t)=M\bm{\alpha}_t$ for every $t$. 
Let us choose $\eta_t$ satisfying (\ref{eq:Wolfe_1})-(\ref{eq:Wolfe_2}). Notice that, according to Lemma~\ref{le:existence_Wolfe}, a choice for $\eta_t$ exists for every $t$ because $f$ is continuously differentiable and bounded below by $0$. By Lemma~\ref{le:bounded_level_sets}  we obtain that the sublevel set $\mathcal{L}=\{\bm{\alpha}|~\underline{f}(\bm{\alpha}) \leq f(\bm{\alpha}_0)\}$ is bounded. Consider an open, convex and bounded set $\mathcal{U}$ that contains $\mathcal{L}$. Since $\underline{f}(\bm{\alpha})$ is a multivariate polynomial, every entry of its Hessian matrix $\nabla^2  \underline{f}(\bm{\alpha})$ is also a multivariate polynomial and is thus bounded on $\mathcal{U}$. By Lemma~\ref{le:Lipschitzianity}, we deduce that $\nabla \underline{f}$ is Lipschitz continuous. By Lemma~\ref{le:Zoutendijk},
\begin{equation}
\label{eq:convergence}
\lim_{t\rightarrow \infty}\nabla \underline{f}(\bm{\alpha}_t)=0\,.
\end{equation}}

\emph{Let $\bm{\alpha}^\star=\lim_{t\rightarrow \infty}\bm{\alpha}_t$ and $\mathbf{K}^\star \in \text{Sparse}(\mathbf{S})$ the corresponding output-feedback controller according to $\text{vec}(\mathbf{K}^\star)=M\bm{\alpha}^\star$. Since $\nabla \underline{f}(\bm{\alpha}^\star)=0$, by Lemma~\ref{le:unconstrained} $\nabla J(\mathbf{K}^\star) \in \text{Sparse}(\mathbf{S}^c)$ and hence $\nabla J(\mathbf{K}^\star) \odot \mathbf{S}=0$, that is, iteration (\ref{eq:iteration}) converges to a stationary point of $\mathcal{P}_K$.  By Theorem~\ref{th:QI_uniquely_stationary}, $\mathbf{K}^\star$ is a globally optimal solution for $\mathcal{P}_K$ because $\text{Sparse}(\mathbf{S})$ is QI with respect to $\mathbf{CP}_{12}$.}
\end{proof}
A choice for $\eta_t$ satisfying the Wolfe conditions (\ref{eq:Wolfe_1})-(\ref{eq:Wolfe_2}) can be found by using, for instance, the bisection algorithm reported in \cite[Proposition~5.5]{aragon2019nonlinear}, which always converges in a finite number of iterations. We conclude this section by providing a numerical example.

\subsection{Numerical example}
Motivated by the example system of \cite{rotkowitz2006characterization}, we consider system (\ref{eq:sys_disc}) and the cost function (\ref{eq:cost_K}) with
\begin{align*}
&A_t={\scriptsize\begin{bmatrix}
1.6&0&0&0&0\\
0.5&1.6&0&0&0\\
2.5&2.5&-1.4&0&0\\
-2&1&-2&0.1&0\\
0&2&0&-0.5&1.1
\end{bmatrix}}\,,
\end{align*}
and $B_t=I,$ $C_t=I$, $M_t=I$, $R_t=I$, $\mathbf{\Sigma}_{w}=I$, $\mathbf{\Sigma}_v=I$,   $\mu_0=\begin{bmatrix}1&-1&2&-3&3\end{bmatrix}^\mathsf{T}$. We set a horizon of $N=3$. Our goal is to compute a controller $\mathbf{K}$ with a given sparsity that minimizes the cost (\ref{eq:cost_K}). Specifically, we aim to solve $\mathcal{P}_K$ with $\mathcal{K}=\text{Sparse}(\mathbf{S})$ and $\mathbf{S}=T\otimes S$, 
where $T_{i,j}=1$ if $j\leq i$ and $T_{i,j}=0$ otherwise, and
\begin{equation*}
S={\scriptsize \begin{bmatrix}
0&0&0&0&0\\
0&1&0&0&0\\
0&1&0&0&0\\
0&1&0&0&0\\
0&1&0&0&1
\end{bmatrix}}\,.
\end{equation*}
The total number of scalar decision variables is $|\mathbf{S}|=\frac{|S|N(N+1)}{2}=30$. It is easy to verify that $\text{Sparse}(\mathbf{S})$ is QI with respect to $\mathbf{CP}_{12}$, for example, by using the binary test \cite[Theorem~1]{furieri2019unified}. By direct computation of the Hessian through the Symbolic Math Toolbox Ver. 7.1 available in MATLAB \cite{MATLAB} we verify that  $J(\mathbf{K})$ is not convex on $\mathcal{K}$ \footnote{specifically we verify $\nabla^2\underline{f}(0)\not \succeq 0$, where $\underline{f}(\bm{\alpha})=f(M\bm{\alpha})$, the columns of $M$ are an orthonormal basis of $\mathcal{K}$ and $f(\text{vec}(\mathbf{K}))=J(\mathbf{K})$ }. Despite this non-convexity, we know by Theorem~\ref{th:GD} that the gradient-descent iteration (\ref{eq:iteration}) will converge to a global optimum of $\mathcal{P}_K$ for $t \rightarrow \infty$ thanks to the QI property.
\subsubsection{Numerical results} 
The gradient-descent iteration (\ref{eq:iteration}) was implemented in MATLAB with the stepsize being chosen according to the bisection algorithm of \cite[Proposition~5.5]{aragon2019nonlinear}. The iteration (\ref{eq:iteration}) was initialized from a variety of randomly selected initial distributed controllers. Specifically, for each entry $(i,j)$ such that $\mathbf{S}_{i,j}=1$ we selected the entry $(\mathbf{K}_0)_{i,j}$ uniformly at random in the interval $[-10,10]$, and set $(\mathbf{K}_0)_{i,j}=0$ otherwise. In all instances, we converged to a cost of $796.5627$ within up to $700$ iterations, with a run time of approximately $2$ seconds. The stopping criterion was selected as $\max |\nabla J(\mathbf{K}_t)\odot \mathbf{S}|<5\cdot 10^{-5}$.  To validate the global optimality result, we also solved the corresponding convex program (\ref{eq:convex_Q}) in $\mathbf{Q}$ with MOSEK \cite{mosek}, called through MATLAB via YALMIP \cite{YALMIP}, and obtained a minimum cost of $796.5627$.

\vspace{6pt}

At this point, it is natural to ask a follow-up question: is the QI/strong QI property \emph{necessary} to guarantee convergence of  gradient-descent to a globally optimal distributed controller? In the following section, we provide a negative answer.

\section{Unique Stationarity: Global Optimality Beyond QI}
\label{sec:US}
In this section, we consider general subspace constraints $\mathbf{K} \in \mathcal{K}$. We define the notion of unique stationarity (US) and show that it allows to step beyond the QI notion in obtaining global optimality certificates with first-order methods. We further provide initial results on verifying the US property in a tractable way.
 

\subsection{Unique stationarity generalizes QI}
We define unique stationarity of problem $\mathcal{P}_K$ as follows.
	\begin{definition}
		Consider problem $\mathcal{P}_K$ subject to a  subspace constraint $\mathbf{K}\in  \mathcal{K}$. We say that $\mathcal{P}_K$ is \emph{Uniquely Stationary} (US)  if and only if:
	\begin{equation}
	\nabla J(\overline{\mathbf{K}}) \in \mathcal{K}^\perp \implies \overline{\mathbf{K}}\in \arg \min_{\mathbf{K} \in \mathcal{K}}J(\mathbf{K})\,. \label{eq:US}
	\end{equation}
	\end{definition}
	
	First, it is easy to see that the class of US problem is at least as large as that of strongly QI problems.
	\begin{corollary}[Theorem~\ref{th:QI_uniquely_stationary}]
\label{co:th2}
Suppose that $\mathcal{K}$ is strongly \emph{QI}. Then, $\mathcal{P}_K$ is \emph{US}.
\end{corollary}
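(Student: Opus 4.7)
The plan is to essentially re-run the argument of Theorem~\ref{th:QI_uniquely_stationary}, but using the fact that Lemma~\ref{le:equivalence_stationary} and the equivalence between $\mathcal{P}_K$ and the convex problem~(\ref{eq:convex_Q}) both hold for any strongly QI subspace $\mathcal{K}$, not only for sparsity subspaces. In Remark~\ref{re:generalization} the authors already flag this generalization, so the proof should be a short derivation rather than a new argument.

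First, I would invoke Theorem~\ref{th:QI}: since strong QI implies QI, we have $h^{-1}(\mathcal{K},\mathbf{CP}_{12}) = \mathcal{K}$, so the nonconvex reformulation~(\ref{eq:nonconvex_Q}) reduces to the convex program~(\ref{eq:convex_Q}) over the same subspace $\mathcal{K}$. In particular, by Lemma~\ref{le:strictly_convex}, for every $\overline{\mathbf{K}} \in \mathcal{K}$ the corresponding $\overline{\mathbf{Q}} := h^{-1}(\overline{\mathbf{K}},\mathbf{CP}_{12}) \in \mathcal{K}$ satisfies $\tilde{J}(\overline{\mathbf{Q}}) = J(\overline{\mathbf{K}})$.

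Next, let $\overline{\mathbf{K}} \in \mathcal{K}$ be any stationary point of $\mathcal{P}_K$, i.e.\ $\nabla J(\overline{\mathbf{K}}) \in \mathcal{K}^\perp$. Applying Lemma~\ref{le:equivalence_stationary} (which is stated for any strongly QI subspace), we obtain $\nabla \tilde{J}(\overline{\mathbf{Q}}) \in \mathcal{K}^\perp$, i.e.\ $\overline{\mathbf{Q}}$ is a stationary point of the convex program~(\ref{eq:convex_Q}). Because $\tilde{J}$ is convex (in fact strictly convex and quadratic by Lemma~\ref{le:strictly_convex}) and $\mathcal{K}$ is a subspace, any such stationary point is a global minimizer of~(\ref{eq:convex_Q}). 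Hence $\tilde{J}(\overline{\mathbf{Q}}) = J^\star$, the optimal value of the convex program, which by the equivalence of~(\ref{eq:convex_Q}) and $\mathcal{P}_K$ is also the optimal value of $\mathcal{P}_K$. Thus $J(\overline{\mathbf{K}}) = J^\star$ and $\overline{\mathbf{K}} \in \arg\min_{\mathbf{K} \in \mathcal{K}} J(\mathbf{K})$, which is exactly the US property~(\ref{eq:US}).

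There is no genuine obstacle in this proof: all the heavy lifting is already done by Theorem~\ref{th:QI}, Lemma~\ref{le:strictly_convex}, and Lemma~\ref{le:equivalence_stationary}. The only subtlety is to emphasize that Lemma~\ref{le:equivalence_stationary} was stated for strongly QI subspaces (not specifically sparsity subspaces), so no additional hypothesis on $\mathcal{K}$ is needed beyond strong QI. For clarity, I would phrase the proof as ``apply Theorem~\ref{th:QI_uniquely_stationary} verbatim, replacing $\text{Sparse}(\mathbf{S})$ and $\text{Sparse}(\mathbf{S}^c)$ by $\mathcal{K}$ and $\mathcal{K}^\perp$; the only place where sparsity was used in that proof was to identify $\text{Sparse}(\mathbf{S})^\perp = \text{Sparse}(\mathbf{S}^c)$ and to invoke the equivalence QI $\Leftrightarrow$ strongly QI for sparsity subspaces, and neither is needed once strong QI is directly assumed.''
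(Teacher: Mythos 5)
Your proof is correct and follows essentially the same route as the paper: the authors simply cite Theorem~\ref{th:QI_uniquely_stationary} together with Remark~\ref{re:generalization}, and your argument unfolds exactly the chain of results (Theorem~\ref{th:QI}, Lemma~\ref{le:strictly_convex}, Lemma~\ref{le:equivalence_stationary}) that Remark~\ref{re:generalization} says carries over verbatim from sparsity subspaces to arbitrary strongly QI subspaces. Your version is slightly more direct in that it starts from a stationary $\overline{\mathbf{K}}$ and maps forward to $\overline{\mathbf{Q}}$ rather than arguing by contradiction as in the proof of Theorem~\ref{th:QI_uniquely_stationary}, but the substance is identical.
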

\begin{proof}
\emph{If $\mathbf{K}$ is a stationary point of $\mathcal{P}_{K}$ then $\Pi_{ \mathcal{K}}(\nabla J(\mathbf{K}))=0$. By Theorem~\ref{th:QI_uniquely_stationary} and Remark~\ref{re:generalization},  $\mathbf{K}$ is a global optimum. Hence, US as per (\ref{eq:US}) holds.}
\end{proof}
	
Second, we extend the global convergence result of Theorem~\ref{th:GD} from strongly QI to US problems. 
\begin{proposition}
Suppose that $\mathcal{P}_K$ is \emph{US}. Let $\mathbf{K}_0 \in \mathcal{K}$ and consider the iteration
\begin{equation}
\label{eq:projected_iteration}
\mathbf{K}_{t+1}=\mathbf{K}_t-\eta_t \Pi_{\mathcal{K}}\left(\nabla J(\mathbf{K}_t)\right)\,.
\end{equation}
 Then, $\mathbf{K}_t \in \mathcal{K}$ for every $t$ and there exists  $\eta_t$ for every $t$ such that $\lim_{t \rightarrow \infty} J(\mathbf{K}_t)=J^\star$, where $J^\star$ is the optimal value of problem $\mathcal{P}_K$.
\end{proposition}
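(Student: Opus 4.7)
The plan is to mirror the proof of Theorem~\ref{th:GD}, replacing the final invocation of Theorem~\ref{th:QI_uniquely_stationary} with a direct appeal to the US definition. First I would verify by induction that $\mathbf{K}_t \in \mathcal{K}$ for every $t$: since $\mathbf{K}_0 \in \mathcal{K}$ and $\Pi_{\mathcal{K}}(\nabla J(\mathbf{K}_t)) \in \mathcal{K}$, the iterate $\mathbf{K}_{t+1}$ remains in $\mathcal{K}$ because $\mathcal{K}$ is a subspace and hence closed under linear combinations.

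Next I would reduce the constrained iteration to an unconstrained one via the same orthonormal reparametrization used in the proof of Theorem~\ref{th:GD}. Let $r = \dim(\mathcal{K})$ and let $M \in \mathbb{R}^{mpN^2 \times r}$ be such that its columns form an orthonormal basis of $\mathcal{K}$ (viewed inside $\mathbb{R}^{mpN^2}$ after vectorization). Define $f(\text{vec}(\mathbf{K})) = J(\mathbf{K})$ and $\underline{f}(\bm{\alpha}) = f(M\bm{\alpha})$. Using $M^\mathsf{T}M = I_r$ and the fact that the orthogonal projector onto $\mathcal{K}$ (in the vectorized representation) is $MM^\mathsf{T}$, I would verify by induction that if $\bm{\alpha}_0$ satisfies $\text{vec}(\mathbf{K}_0) = M\bm{\alpha}_0$, then the iteration (\ref{eq:projected_iteration}) coincides with $\bm{\alpha}_{t+1} = \bm{\alpha}_t - \eta_t \nabla \underline{f}(\bm{\alpha}_t)$ at every step, exactly as in the proof of Theorem~\ref{th:GD}.

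I would then choose $\eta_t$ to satisfy the Wolfe conditions (\ref{eq:Wolfe_1})--(\ref{eq:Wolfe_2}), whose existence is guaranteed by Lemma~\ref{le:existence_Wolfe} since $\underline{f}$ is continuously differentiable and bounded below by $0$. Lemma~\ref{le:bounded_level_sets} applied to $J$ implies that the sublevel set $\mathcal{L} = \{\bm{\alpha} \mid \underline{f}(\bm{\alpha}) \leq \underline{f}(\bm{\alpha}_0)\}$ is bounded (the image of a bounded set of $\mathbf{K}$'s under the inverse of the isometry $M$). Pick an open convex bounded set $\mathcal{U}$ containing $\mathcal{L}$; since $\underline{f}$ is a multivariate polynomial (Remark~\ref{re:polynomial}), $\nabla^2 \underline{f}$ is bounded on $\mathcal{U}$, so by Lemma~\ref{le:Lipschitzianity} $\nabla \underline{f}$ is Lipschitz on $\mathcal{U}$. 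Zoutendijk (Lemma~\ref{le:Zoutendijk}) then yields $\lim_{t\to\infty} \nabla \underline{f}(\bm{\alpha}_t) = 0$.

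Finally, let $\bm{\alpha}^\star$ be any limit point of $\{\bm{\alpha}_t\}$ (existence is guaranteed by boundedness of $\mathcal{L}$) and let $\mathbf{K}^\star$ be the corresponding controller with $\text{vec}(\mathbf{K}^\star) = M\bm{\alpha}^\star$. By Lemma~\ref{le:unconstrained}, $\nabla \underline{f}(\bm{\alpha}^\star) = 0$ is equivalent to $\nabla J(\mathbf{K}^\star) \in \mathcal{K}^\perp$, so $\mathbf{K}^\star$ is a stationary point of $\mathcal{P}_K$. The US hypothesis (\ref{eq:US}) then gives $J(\mathbf{K}^\star) = J^\star$, and continuity of $J$ together with convergence of costs along the iterates (monotone by the Wolfe condition (\ref{eq:Wolfe_1})) gives $\lim_{t\to\infty} J(\mathbf{K}_t) = J^\star$. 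The only substantive difference from the proof of Theorem~\ref{th:GD} is this last step, where the QI-based global optimality certificate is replaced by the US definition itself; there is no real obstacle, since the whole point of introducing US was precisely to make this step immediate.
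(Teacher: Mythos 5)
Your proof is correct and follows essentially the same route as the paper's: the authors' own argument is precisely "mirror the proof of Theorem~\ref{th:GD} with $M$ an orthonormal basis of $\mathcal{K}$ to reach a stationary point, then invoke the US definition in place of Theorem~\ref{th:QI_uniquely_stationary}." Your version merely spells out the details the paper leaves implicit (and is slightly more careful in working with a limit point of $\{\bm{\alpha}_t\}$ rather than assuming the sequence converges), so there is nothing to add.
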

\begin{proof}
\emph{The proof mirrors that of Theorem~\ref{th:GD} by selecting $M$ such that its columns are an orthonormal basis of $\mathcal{K}$ in proving that $(\ref{eq:projected_iteration})$ converges to a stationary point. Since $\mathcal{P}_K$ is US, every stationary point is optimal.}
\end{proof}	
In other words, every US problem can be solved to global optimality with projected gradient-descent. Third, we characterize a US problem that is neither strongly QI nor QI.

\subsubsection{Example US beyond QI}
Consider the system (\ref{eq:sys_disc}) and the cost function (\ref{eq:cost_K}) with
\begin{align*}
&A_t=\begin{bmatrix}
1&2\\
-1&-3
\end{bmatrix}\,, \quad B_t=I\,,~ C_t=I\,, ~\Sigma_0=I\,,\\
&M_t=I\,,~R_t=I\,,~\Sigma^w_{t}=0\,,~\mathbf{\Sigma}_v=I\,,~\forall t=0,1,2\,,
\end{align*}
and $\mu_0=\begin{bmatrix}0&1\end{bmatrix}^\mathsf{T}$ where we set a horizon of $N=2$. The controller $\mathbf{K}$ is subject to being in the form $\mathbf{K}=I_N \otimes K$ for some $K \in \mathbb{R}^{2 \times 2}$. In other words, we consider a static-controller $u_t=Ky_t$ in finite-horizon. Note that in the finite-horizon setup it is not necessary to require that $(A+BK)$ is Hurwitz, since the finite-horizon cost $J(\mathbf{K})$ is finite for every $\mathbf{K}$, as opposed to the infinite-horizon cases of \cite{bu2019lqr, feng2019exponential}. Additionally, we require that $K$ is decentralized, or equivalently $K \in \text{Sparse}(I_2)$. In summary, we enforce
\begin{equation*}
\mathbf{K} \in \mathcal{K}=\{\mathbf{K}=I_N \otimes \text{diag}(a,b),~a,b \in \mathbb{R}\}\,.
\end{equation*}
By computing $\mathbf{KCP}_{12}\mathbf{K}$ for a generic $\mathbf{K} \in \mathcal{K}$ it is easy to verify  that $\mathcal{K}$ is neither strongly QI or QI with respect to $\mathbf{CP}_{12}$. Hence, a convex program equivalent to $\mathcal{P}_K$ in the $\mathbf{Q}$ domain does not exist by Theorem~\ref{th:QI}. Nonetheless, we prove that $\mathcal{P}_K$ is US and can thus be solved to global optimality with gradient-descent.

\emph{Proof of US:}  For any $\mathbf{K} \in \mathcal{K}$ we verify
\begin{align*}
J(\mathbf{K})=\underline{f}(a,b)&=4a^4 + 8a^3 + 28a^2 + 18ab - 38a+ 6b^4 - 42b^3 + 149b^2 - 216b + 166\,.
\end{align*}

The expression above can be obtained by using the Symbolic Math Toolbox in MATLAB \cite{MATLAB}. The Hessian is
\begin{align*}
\nabla^2 \underline{f}(a,b)=\begin{bmatrix}
48a^2 + 48a + 56&18\\18&72b^2 - 252b + 298
\end{bmatrix}\,.
\end{align*}
We verify that  $48a^2 + 48a + 56 = 12(2a+1)^2+44>0$ for all $a\in \mathbb{R}$ and 
\begin{align*}
\det{\left(\nabla^2 \underline{f}(a,b)\right)}&=24(1+2a)^2\left(36\left(b-1.75\right)^2+38.75\right)+198(7-4b)^{2}+3086>0, \quad \forall a,b \in \mathbb{R}\,.
\end{align*}
It follows that $\nabla^2 \underline{f}(a,b)\succ 0$  for all  $a,b \in \mathbb{R}$, and hence $J(\mathbf{K})$ is convex on $\mathcal{K}$. We conclude that $\mathcal{P}_K$ is  US, despite not being QI. The globally optimal controller $\mathbf{K}^\star=I_N \otimes \text{diag}(0.2752,1.1354)$ is found on average in $11$ iterations of (\ref{eq:projected_iteration}) with the two free variables of $\mathbf{K}_0 \in \mathcal{K}$ randomly selected in $[-10,10]$, stepsize  as per \cite[Proposition~5.5]{aragon2019nonlinear} and stopping criterion $\max |\Pi_\mathcal{K}\left(\nabla J(\mathbf{K}_t)\right)|<5\cdot 10^{-5}$. 

Last, we summarize the main result of this section as follows. A corresponding visualization is reported in Figure~\ref{fig:scheme}.
	\begin{theorem}
	\label{th:US_QI}
	The class of \emph{US} problems is both
	\begin{enumerate}
	\item strictly larger than the class of strongly \emph{QI} problems, 
	\item  not included in the class of \emph{QI} problems.
	\end{enumerate}
	\end{theorem}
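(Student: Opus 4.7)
The plan is to stitch together the ingredients already developed above the statement. For the inclusion side of claim~(1), namely strongly QI $\subseteq$ US, I would simply cite Corollary~\ref{co:th2}, which was established as a direct consequence of Theorem~\ref{th:QI_uniquely_stationary} together with Remark~\ref{re:generalization}. That handles the containment half of (1) at no cost.

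For strictness in (1) and for claim~(2), I would exhibit a \emph{single} witness: a problem instance that is US but whose constraint subspace $\mathcal{K}$ is not QI with respect to $\mathbf{CP}_{12}$. Since strong QI trivially implies QI, any subspace that fails QI also fails strong QI, so one carefully chosen example dispatches both claims simultaneously. The witness I would use is the two-state, horizon-$N{=}2$, static diagonal-controller problem constructed in the \emph{Example US beyond QI} paragraph just above the theorem, with $\mathcal{K}=\{\mathbf{K}=I_N\otimes\mathrm{diag}(a,b):a,b\in\mathbb{R}\}$.

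The verification splits into two checks. First, I would show $\mathcal{K}$ is not QI by computing $\mathbf{K}\mathbf{CP}_{12}\mathbf{K}$ for a generic $\mathbf{K}\in\mathcal{K}$ and exhibiting an entry that is forced to be zero in $\mathcal{K}$ but is generically nonzero in the product. Second, I would show that the problem is US by reducing $J$ restricted to $\mathcal{K}$ to the bivariate polynomial $\underline{f}(a,b)$ displayed above and proving that its Hessian $\nabla^2\underline{f}$ is positive definite everywhere on $\mathbb{R}^2$; strict convexity of $\underline{f}$ on all of $\mathbb{R}^2$ then forces its unique stationary point to be the global minimizer, which is precisely the US property required in~\eqref{eq:US}.

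The one step I expect to require genuine care is the \emph{global} positive-definiteness of $\nabla^2\underline{f}$: one must rule out bad $(a,b)$ uniformly rather than only locally. I would mimic the algebraic trick already suggested in the paper, namely rewriting $48a^2+48a+56 = 12(2a+1)^2+44>0$ and rearranging $\det(\nabla^2\underline{f}(a,b))$ as a sum of products of perfect squares plus a strictly positive constant, so that Sylvester's criterion applies uniformly in $(a,b)$. Everything else is bookkeeping and is already justified by the preceding lemmas and by Corollary~\ref{co:th2}.
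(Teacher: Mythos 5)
Your proposal is correct and follows essentially the same route as the paper: the containment is Corollary~\ref{co:th2}, and both the strictness of the inclusion and the non-containment in QI are witnessed by the single \emph{Example US beyond QI} instance, whose US property is certified by global positive definiteness of $\nabla^2\underline{f}(a,b)$ exactly as you describe. No gaps.
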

	



	\begin{proof}
\emph{Every strongly QI problem is US by Corollary~\ref{co:th2}.  We have shown an instance of a US problem which is  neither strongly QI or QI. This proves that the class of US problems is both strictly larger than strongly QI problems and not included in QI problems.}
	\end{proof}

		\begin{figure}[t]
	      \centering
	      	\includegraphics[width=0.5\textwidth]{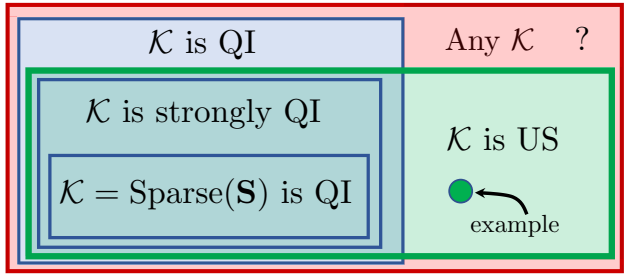}
    \caption{ Problems in the blue region can be solved with convex programming (QI problems). Problems in the green region can be solved with gradient-descent (US problems). The US region includes the case of QI sparsity constraints. The green circle stands for the explicit example we provided. Different methods are needed to solve problems in the red region; however, whether the red region contains any problem at all is an open question.  The question mark  indicates problems whose existence is yet to be verified.}
    \label{fig:scheme}
	\end{figure}
	
	We remark that the notion of US genuinely extends QI in terms of providing global optimality certificates for distributed control. This  might sound surprising at first.  To grasp this fact, notice that QI is method-specific, in the sense that it is only necessary for global optimality certificates when one uses  convex programming in the $\mathbf{Q}$ domain \cite{QIconvexity}. On the contrary, we have shown that $\mathcal{P}_K$ might be uniquely stationary and even convex in the original $\mathbf{K}$ coordinates despite being non-convex in the $\mathbf{Q}$ domain. This observation and  Corollary~\ref{co:th2} allow stepping beyond the QI limitations, from convexity in $\mathbf{Q}$ to unique stationarity in $\mathbf{K}$!
	

	\subsection{Tests for unique stationarity}
	Note that the US property, while having a theoretical interest,  might not be useful in practice in the form (\ref{eq:US}). This is because, in general, one can only prove (\ref{eq:US}) by knowing the set of global optima. For this reason, it is necessary to identify sufficient conditions for US. While noting that more general tests should be envisioned in future research, we provide our initial results. A first test of US given sparsity constraints follows naturally from Corollary~\ref{th:QI_uniquely_stationary}:
	\begin{corollary}
	Suppose that $\mathcal{K}=\text{\emph{Sparse}}(\mathbf{S})$ and Let $\bm{\Delta}=\text{\emph{Struct}}(\mathbf{CP}_{12})$. Then 
	\begin{equation*}
\mathbf{S}\bm{\Delta}\mathbf{S}\leq \mathbf{S} \implies \mathcal{P}_K\text{ is \emph{US}}\,.
\end{equation*}
	\end{corollary}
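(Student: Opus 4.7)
The plan is to recognize that the hypothesis $\mathbf{S}\bm{\Delta}\mathbf{S}\leq \mathbf{S}$ is precisely the binary sparsity test that certifies $\text{Sparse}(\mathbf{S})$ is QI with respect to $\mathbf{CP}_{12}$, after which US follows immediately from the already-established chain of results (in particular Corollary~\ref{co:th2}). So the proof will be a short reduction rather than a self-contained argument.

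First, I would unfold the binary matrix arithmetic introduced in the Notation subsection. Recall that for binary matrices the convention is $\mathbf{S}\bm{\Delta}:=\text{Struct}(\mathbf{S}\bm{\Delta})$, so $(\mathbf{S}\bm{\Delta}\mathbf{S})_{i,j}=1$ if and only if there exist indices $k,\ell$ with $\mathbf{S}_{i,k}=1$, $\bm{\Delta}_{k,\ell}=1$, $\mathbf{S}_{\ell,j}=1$. Translating back to real matrices, for any $\mathbf{K}_1,\mathbf{K}_2\in\text{Sparse}(\mathbf{S})$ the sparsity pattern of the product $\mathbf{K}_1\mathbf{CP}_{12}\mathbf{K}_2$ is contained in $\mathbf{S}\bm{\Delta}\mathbf{S}$. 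Hence the inequality $\mathbf{S}\bm{\Delta}\mathbf{S}\leq\mathbf{S}$ is equivalent to $\mathbf{K}_1\mathbf{CP}_{12}\mathbf{K}_2\in\text{Sparse}(\mathbf{S})$ for all $\mathbf{K}_1,\mathbf{K}_2\in\text{Sparse}(\mathbf{S})$, i.e.\ to strong QI of $\text{Sparse}(\mathbf{S})$ with respect to $\mathbf{CP}_{12}$.

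Second, having established strong QI, I would invoke Corollary~\ref{co:th2}, which asserts that every strongly QI problem is US. Combining the two steps yields the implication $\mathbf{S}\bm{\Delta}\mathbf{S}\leq\mathbf{S}\Longrightarrow \mathcal{P}_K$ is US, which is exactly the claim. Alternatively, one can bypass the intermediate ``strongly QI'' label and apply Theorem~\ref{th:QI_uniquely_stationary} directly, since the hypothesis certifies QI for a sparsity subspace (and for sparsity subspaces QI coincides with strong QI by \cite{rotkowitz2006characterization}).

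The only nontrivial step is the verification that $\mathbf{S}\bm{\Delta}\mathbf{S}\leq \mathbf{S}$ captures strong QI, but this is essentially a bookkeeping exercise with the structural-sparsity conventions in the notation section; no delicate estimates or dynamic-programming style arguments are required. If I wanted to make the proof fully self-contained, the main (and still very mild) obstacle would be spelling out carefully that containment of sparsity patterns of the binary product implies containment of the sparsity of every realization $\mathbf{K}_1\mathbf{CP}_{12}\mathbf{K}_2$, i.e.\ that replacing arbitrary nonzero entries of $\mathbf{CP}_{12}$ by their structural $1$s is a safe over-approximation. Once this is observed, the rest is a one-line appeal to existing results.
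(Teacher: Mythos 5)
Your proposal is correct and follows essentially the same route as the paper: the paper's own proof simply cites \cite[Theorem~1]{furieri2019unified} for the equivalence between $\mathbf{S}\bm{\Delta}\mathbf{S}\leq\mathbf{S}$ and strong QI of $\text{Sparse}(\mathbf{S})$, and then concludes US via Theorem~\ref{th:QI_uniquely_stationary}. The only difference is that you sketch the (easy, and only needed) forward direction of that binary test yourself rather than citing it, which is a harmless elaboration.
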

	\begin{proof}
	\emph{By \cite[Theorem~1]{furieri2019unified}, $\text{Sparse}(\mathbf{S})$ is strongly QI with respect to $\mathbf{CP}_{12}$ if and only if $\mathbf{S}\bm{\Delta}\mathbf{S}\leq \mathbf{S}$. Hence, $\mathcal{P}_K$ is US as a consequence of Theorem~\ref{th:QI_uniquely_stationary}.}
	\end{proof}
Notice that $\mathbf{S} \bm{\Delta} \mathbf{S}\leq \mathbf{S}$ is verified in polynomial time in $m$, $p$ and $N$. A second sufficient test for US beyond QI is to check whether $J(\mathbf{K})$ is convex on $\mathcal{K}$.

\begin{proposition}
Let $f:\mathbb{R}^{mpN^2} \rightarrow \mathbb{R}$ be such that $f(\text{\emph{vec}}(\mathbf{K}))=J(\mathbf{K})$ and $\underline{f}:\mathbb{R}^r\rightarrow \mathbb{R}$ be such that $\underline{f}(\bm{\alpha})=f(M\bm{\alpha})$ where the columns of $M$ are an orthonormal basis of $\mathcal{K}$ and $r$ is the dimension of $\mathcal{K}$. Let $\nabla^2_g\underline{f}(\bm{\alpha}) \in \mathbb{R}^{g \times g}$ denote the submatrix of $\nabla^2 \underline{f}(\bm{\alpha})$ obtained by removing its last $r-g$ rows and columns. Then
\begin{equation*}
\det\left(\nabla^2_g\underline{f}(\bm{\alpha}) \right)>0\quad \forall \bm{\alpha},~\forall g=1,\ldots,r \implies \mathcal{P}_K\text{ is US}\,.
\end{equation*}
\end{proposition}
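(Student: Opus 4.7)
The plan is to recognize the hypothesis as Sylvester's criterion applied pointwise to the Hessian of $\underline{f}$, conclude strict convexity of $\underline{f}$, and then transport this property back to $\mathcal{P}_K$ via Lemma~\ref{le:unconstrained}. The strategy avoids ever having to work with the non-convex function $J$ on $\mathcal{K}$ directly: instead, the orthonormal basis $M$ lets us phrase everything as an unconstrained problem in $\bm{\alpha}$, for which strict convexity immediately yields uniqueness of stationary points.

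First, I would observe that the hypothesis $\det(\nabla^2_g \underline{f}(\bm{\alpha})) > 0$ for all $g=1,\dots,r$ is precisely the statement that all leading principal minors of the symmetric matrix $\nabla^2 \underline{f}(\bm{\alpha})$ are strictly positive. By Sylvester's criterion, this is equivalent to $\nabla^2 \underline{f}(\bm{\alpha}) \succ 0$. Since this holds at every $\bm{\alpha} \in \mathbb{R}^r$, the function $\underline{f}$ is strictly convex on $\mathbb{R}^r$. Strict convexity implies that any critical point of $\underline{f}$ is the unique global minimizer.

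Next, I would translate the US condition into $\bm{\alpha}$-coordinates. Let $\overline{\mathbf{K}} \in \mathcal{K}$ be an arbitrary stationary point of $\mathcal{P}_K$, so that $\nabla J(\overline{\mathbf{K}}) \in \mathcal{K}^\perp$. Because the columns of $M$ form an orthonormal basis of $\mathcal{K}$, there is a unique $\bar{\bm{\alpha}} \in \mathbb{R}^r$ with $\text{vec}(\overline{\mathbf{K}}) = M \bar{\bm{\alpha}}$. Lemma~\ref{le:unconstrained}, part~2, applied to $f$ with $f(\text{vec}(\mathbf{K})) = J(\mathbf{K})$, gives $\nabla \underline{f}(\bar{\bm{\alpha}}) = 0$. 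By the previous paragraph, $\bar{\bm{\alpha}}$ is then the global minimizer of $\underline{f}$. Finally, Lemma~\ref{le:unconstrained}, part~1, yields $J(\overline{\mathbf{K}}) = \underline{f}(\bar{\bm{\alpha}}) = \min_{\bm{\alpha} \in \mathbb{R}^r} \underline{f}(\bm{\alpha}) = \min_{\mathbf{K} \in \mathcal{K}} J(\mathbf{K})$, so $\overline{\mathbf{K}}$ is globally optimal for $\mathcal{P}_K$. This is exactly the US condition~(\ref{eq:US}).

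There is no real obstacle here; the proof is a short composition of three standard facts (Sylvester's criterion, strict-convexity implies uniqueness of critical points, and Lemma~\ref{le:unconstrained}). The only points that require a small amount of care are confirming that $\nabla^2_g \underline{f}$ really denotes the leading $g \times g$ principal submatrix (so that Sylvester applies, rather than merely a single principal minor criterion), and invoking the correct direction of Lemma~\ref{le:unconstrained}, part~2, which is precisely the equivalence between vanishing of $\nabla \underline{f}$ and $\nabla J$ lying in $\mathcal{K}^\perp$.
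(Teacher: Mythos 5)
Your proof is correct and follows essentially the same route as the paper's: certify $\nabla^2\underline{f}(\bm{\alpha})\succ 0$ everywhere from the positivity of the leading principal minors (Sylvester's criterion), conclude convexity of $\underline{f}$, and hence that every stationary point is a global minimum. You are in fact more careful than the paper's two-line argument, since you explicitly route the equivalence of stationary points and of optimal values through Lemma~\ref{le:unconstrained}, a step the paper leaves implicit.
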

\begin{proof}
\emph{By definition  $\underline{f}$ is convex if and only if $J$ is convex on $\mathcal{K}$. The function $\underline{f}$ is convex if and only $\nabla^2\underline{f}(\bm{\alpha})\succ 0$ for every $\bm{\alpha}$, or equivalently, the determinant of each principal minor of $\nabla^2\underline{f}(\bm{\alpha})$ is positive for every $\bm{\alpha}$. }
\end{proof}
Notice that $\det\left(\nabla^2_g\underline{f}(\bm{\alpha}) \right)$ is a polynomial for every $g$. Deciding positivity of multivariate polynomials is NP-hard in general, but it can be performed in finite time \cite{becker2000deciding}. When  $\det\left(\nabla^2_g\underline{f}(\bm{\alpha}) \right)$  is a Sum-of-squares (SOS) for every $g$, as in the example we provided, then the US property can be decided in polynomial time with standard techniques \cite{parrilo2000structured}.

	\section{Conclusions}
	\label{sec:conclusions}
	We have addressed convergence to a global optimum of first-order methods for the distributed discrete-time LQG problem in finite-horizon. If the strong QI property holds, a projected gradient-descent algorithm is guaranteed to converge to a global optimum. Moreover, we have characterized the class of uniquely stationary (US) problems, for which projected gradient-descent converges to a global optimum.  We have proved that the class of US problems is strictly larger than strongly QI problems and not included in QI problems. Our results indicate that first-order methods in the $\mathbf{K}$ domain are superior to convex programming in the $\mathbf{Q}$ domain in terms of generality of their  global optimality certificates and allow stepping beyond the long-standing QI limitation \cite{rotkowitz2006characterization}.  Additionally, first-order methods can be used  to learn globally optimal distributed controllers when the system and the cost function are unknown, as was recently shown in \cite{fazel2018global,gravell2019learning,hassan2019data} for the non-distributed case.  We envision that future work will discuss application of our methods to learning-based distributed control. 
	
This work initiates the research for novel classes of constrained and distributed control problems, for which a test of the US property  beyond QI and beyond testing convexity of $\mathcal{P}_K$ in the $\mathbf{K}$ domain is available. For instance, one could study under which conditions $J(\mathbf{K})$ is \emph{gradient dominated} on $\mathcal{K}$ \cite{polyak1963gradient}. 
	In the finite-horizon setting considered here, it is important to either confirm or disprove the existence of non-US problems, indicated by ``?'' in Figure~\ref{fig:scheme}. This insight  would further advance the comprehension of the mathematical challenges inherent to linear distributed control.   Last, it is  important to address the  infinite-horizon and continuous-time cases. In infinite-horizon,  \cite{feng2019exponential} provided  explicit examples of problems that are non-US due to the set of distributed static stabilizing controllers being disconnected; it is interesting to explore whether dynamic controllers  can mitigate this issue.
	
	\section*{Acknowledgements}
	We thank Tyler Summers and Ilnura Usmanova for useful discussions.

 \section*{Appendix}
 \subsection*{Derivation of the cost function $J(\mathbf{K})$}
Note that the cost (\ref{eq:cost}) is equivalent to
 \begin{equation}
 \label{eq:cost_compact}
 J(\mathbf{K})=\mathbb{E}_{\mathbf{w,v}}\left(\mathbf{x}^\mathsf{T}\mathbf{M}\mathbf{x}+\mathbf{u}^\mathsf{T}\mathbf{R}\mathbf{u}\right)\,.
 \end{equation} 
  Now consider the control input $\mathbf{u}=\mathbf{Ky}$. The closed-loop state, output and input trajectories are given in (\ref{eq:closed_loop_K}), where $\mathbf{x}$ and $\mathbf{u}$ are expressed as a function of $\mathbf{w}$ and $\mathbf{v}$. Substitute (\ref{eq:closed_loop_K}) into (\ref{eq:cost_compact}). By using the fact that for any matrix $\mathbf{X}$ we have
  \begin{align*}
  &\mathbb{E}_{\mathbf{w}}\left(\mathbf{w}^\mathsf{T}\mathbf{Xw}\right)=\text{Trace}(\mathbf{X}\mathbf{\Sigma}_w)+\bm{\mu}_w^\mathsf{T}\mathbf{X}\bm{\mu}_w\,,\\
  &\mathbb{E}_{\mathbf{v}}\left(\mathbf{v}^\mathsf{T}\mathbf{Xv}\right)=\text{Trace}(\mathbf{X}\mathbf{\Sigma}_v)\,,\quad\mathbb{E}_{\mathbf{w,v}}\left(\mathbf{w}^\mathsf{T}\mathbf{Xv}\right)=0\,,
  \end{align*}
  and remembering that $\norm{\mathbf{X}}_F^2=\text{Trace}(\mathbf{X}^\mathsf{T}\mathbf{X})$ we obtain the expression (\ref{eq:cost_K}).
\subsection*{Proof of Lemma~\ref{le:strictly_convex}}
\label{app:le:strictly_convex} 
$1)$ By using several relationships to compute derivatives with respect to matrices from \cite{petersen2008matrix}  and the fact that $\text{vec}(AXB)=(B^\mathsf{T} \otimes A) \text{vec}(X)$ we obtain that 
\begin{align*}
&\nabla \left(\text{vec}\left(\nabla \tilde{J}(\mathbf{Q})\right)\right)\\&=2\big[\underbrace{(\mathbf{\Sigma}_v+\mathbf{CP}_{11}\mathbf{\Sigma}_w\mathbf{P}_{11}^\mathsf{T}\mathbf{C}^\mathsf{T})}_{\succ 0}\otimes\underbrace{(\mathbf{R}+\mathbf{P}_{12}^\mathsf{T}\mathbf{M}\mathbf{P}_{12})}_{\succ 0}+\underbrace{\mathbf{CP}_{11}\bm{\mu}_w\bm{\mu}_w^\mathsf{T}\mathbf{P}_{11}^\mathsf{T}\mathbf{C}^\mathsf{T}}_{\succeq 0}\otimes \underbrace{(\mathbf{R}+\mathbf{P}_{12}^\mathsf{T}\mathbf{P}_{12 })}_{\succ 0}\big]\succ 0\,,\end{align*}
because $\mathbf{R,\Sigma}_v\succ 0$ and $\mathbf{M},\mathbf{\Sigma}_w \succeq 0$ by hypothesis.  It follows that $\tilde{J}(\mathbf{Q})$ is a quadratic form that is strictly convex.  The statements $2)$ and $3)$ follow 
from direct computation by exploiting the definition of the function $h$.


\subsection*{Proof of Lemma~\ref{le:bounded_level_sets}}
\label{app:le:bounded_level_sets}
 Since $\tilde{J}(\mathbf{Q})$ is strictly convex by Lemma~\ref{le:strictly_convex}, its sublevel set 
$\tilde{\mathcal{L}}:=\{\mathbf{Q}|~\tilde{J}(\mathbf{Q})\leq J(\mathbf{K}_0)\}$ is bounded for any $\mathbf{K}_0$ \cite[Ch. 9.1.2]{boyd2004convex}. Since $\tilde{J}(\mathbf{Q})=J(h(\mathbf{Q},\mathbf{CP}_{12}))$ for every $\mathbf{Q}$ we have $\mathcal{L}=h(\tilde{\mathcal{L}},\mathbf{CP}_{12})$. Now notice that
\begin{equation}
\label{eq:h_expanded}
h(\mathbf{Q},\mathbf{CP}_{12})=\sum_{i=0}^N(-1)^i(\mathbf{QCP}_{12})^i\mathbf{Q}\,,
\end{equation}
because each $m \times m$ block of $\mathbf{QCP}_{12}$ is the zero matrix by construction. Hence, every entry of matrix $h(\mathbf{Q},\mathbf{CP}_{12})$ is a multivariate polynomial in $\mathbf{Q}$, that is, a continuous function. We conclude that  $\mathcal{L}$ is bounded if and only if $\tilde{\mathcal{L}}$ is bounded. Since $\tilde{\mathcal{L}}$ is bounded for any $\mathbf{K}_0$, the result follows.

 \subsection*{Proof of Lemma~\ref{le:equivalence_stationary}}
 $\Leftarrow)$ In the interest of readability, in this proof we omit the second argument of the function $h(\cdot,\cdot)$, which is assumed to always be fixed to $\mathbf{CP}_{12}$. Assume that $\nabla J(\overline{\mathbf{K}}) \in \mathcal{K}^\perp$, but  $ \nabla \tilde{J}(\mathbf{\overline{\mathbf{Q}}})\not \in \mathcal{K}^\perp$.  Then, there exists $\tilde{\mathbf{Q}} \in \mathcal{K}$ with $\tilde{\mathbf{Q}} \neq 0$ with:
 \begin{equation*}
 \lim_{\epsilon \rightarrow 0} \frac{\tilde{J}(\overline{\mathbf{Q}}+\epsilon\tilde{\mathbf{Q}})-\tilde{J}(\overline{\mathbf{Q}})}{\epsilon}=k\neq 0\,.
 \end{equation*}
  Equivalently,  since $h(\cdot)$ is invertible,
  \begin{align}
& \lim_{\epsilon \rightarrow 0} \frac{\tilde{J}(h^{-1}(h(\overline{\mathbf{Q}}+\epsilon\tilde{\mathbf{Q}})))-\tilde{J}(\overline{\mathbf{Q}})}{\epsilon} \nonumber = \lim_{\epsilon \rightarrow 0} \frac{J(h(\overline{\mathbf{Q}}+\epsilon\tilde{\mathbf{Q}}))-J(\overline{\mathbf{K}})}{\epsilon}=k\neq 0\,.\label{eq:derivative_ginverse}
 \end{align}
 Now, using a first-order Taylor expansion we have
 \begin{align}
&h(\overline{\mathbf{Q}}+\epsilon\tilde{\mathbf{Q}})=(I+(\overline{\mathbf{Q}}+\epsilon \tilde{\mathbf{Q}})\mathbf{CP}_{12})^{-1}(\overline{\mathbf{Q}}+\epsilon \tilde{\mathbf{Q}})\\\nonumber
&=(I+\overline{\mathbf{Q}}\mathbf{CP}_{12}+\epsilon \tilde{\mathbf{Q}} \mathbf{CP}_{12})^{-1}(\overline{\mathbf{Q}}+\epsilon \tilde{\mathbf{Q}})\\\nonumber
&=\Big[ (I+\overline{\mathbf{Q}}\mathbf{CP}_{12})^{-1}-\epsilon(I+\overline{\mathbf{Q}}\mathbf{CP}_{12})^{-1}\tilde{\mathbf{Q}}\mathbf{CP}_{12}(I+\overline{\mathbf{Q}}\mathbf{CP}_{12})^{-1}+\mathcal{O}(\epsilon^2)\Big](\overline{\mathbf{Q}}+\epsilon \tilde{\mathbf{Q}})\\&=\overline{\mathbf{K}}+\epsilon \tilde{\mathbf{K}}+\mathcal{O}(\epsilon^2)\,,\nonumber
 \end{align}
 where
 \begin{equation}
 \label{eq:K_tilde}
 \tilde{\mathbf{K}}=(I+\overline{\mathbf{Q}}\mathbf{CP}_{12})^{-1}(\tilde{\mathbf{Q}}-\tilde{\mathbf{Q}}\mathbf{CP}_{12}(I+\overline{\mathbf{Q}}\mathbf{CP}_{12})^{-1} \overline{\mathbf{Q}})\,.
 \end{equation}
By (\ref{eq:h_expanded}) and by applying the strong QI property we deduce that $\tilde{\mathbf{K}} \in \mathcal{K}$. By substituting the above derivations into (\ref{eq:derivative_ginverse}):
 \begin{align*}
 &\lim_{\epsilon \rightarrow 0} \frac{J(\overline{\mathbf{K}}+\epsilon \tilde{\mathbf{K}}+\mathcal{O}(\epsilon^2))-J(\overline{\mathbf{K}})}{\epsilon}=\lim_{\epsilon \rightarrow 0} \frac{J(\overline{\mathbf{K}}+\epsilon \tilde{\mathbf{K}})-J(\overline{\mathbf{K}})}{\epsilon}=k\neq 0\,.
 \end{align*}
 Since $\tilde{\mathbf{K}} \in \mathcal{K}$ and is non-null due to $\tilde{\mathbf{Q}} \neq 0$, this contradicts $\nabla J(\overline{\mathbf{K}}) \in \mathcal{K}^\perp$.
 $\Rightarrow)$ can be proven analogously.

\subsection*{Proof of Lemma~\ref{le:unconstrained}}
Since $\text{Im}(M)=\mathcal{K}$, minimizing $\underline{f}$ on $\mathbb{R}^r$ is equivalent to minimizing $f$ on $\mathcal{K}$. Hence, the first point holds by definition of $\underline{f}$. For the second point, we have $\nabla \underline{f}(\bm{\alpha})=M^\mathsf{T}\nabla f(M\bm{\alpha})$ by the derivative chain rule. We deduce that $\nabla \underline{f}(\bm{\alpha})=0$ if and only if $\nabla f(M\bm{\alpha})\in \text{Ker}(M^\mathsf{T})=\text{Im}(M)^\perp=\mathcal{K}^\perp\,.$
 
%
%

	\bibliographystyle{IEEEtran}

	\bibliography{IEEEabrv,references2}

	\newpage

\end{document}